\pgfplotsset{compat=1.16}
\newcommand{\atime}{\ensuremath{\tau}\xspace}
\newcommand{\stops}{\ensuremath{\mathcal S}\xspace}
\newcommand{\astop}{\ensuremath{v}\xspace}
\newcommand{\stopa}{\ensuremath{u}\xspace}
\newcommand{\stopb}{\ensuremath{v}\xspace}
\newcommand{\departureTime}{\ensuremath{\atime_\text{dep}}\xspace}
\newcommand{\arrivalTime}{\ensuremath{\atime_\text{arr}}\xspace}
\newcommand{\trips}{\ensuremath{\mathcal{T}}\xspace}
\newcommand{\aTrip}{\ensuremath{T}\xspace}
\newcommand{\aTripA}{\ensuremath{\aTrip_{\mkern-2mu{}a}}\xspace}
\newcommand{\aTripB}{\ensuremath{\aTrip_{\mkern-2mu{}b}}\xspace}
\newcommand{\aTripC}{\ensuremath{\aTrip_{\mkern-2mu{}c}}\xspace}
\newcommand{\aTripD}{\ensuremath{\aTrip_{\mkern-2mu{}d}}\xspace}
\newcommand{\aTripE}{\ensuremath{\aTrip_{\mkern-2mu{}e}}\xspace}
\newcommand{\aTripF}{\ensuremath{\aTrip_{\mkern-2mu{}f}}\xspace}
\newcommand{\aTripG}{\ensuremath{\aTrip_{\mkern-2mu{}g}}\xspace}
\newcommand{\aTripH}{\ensuremath{\aTrip_{\mkern-2mu{}h}}\xspace}
\newcommand{\routes}{\ensuremath{\mathcal R}\xspace}
\newcommand{\aRoute}{\ensuremath{R}\xspace}
\newcommand{\stopEvent}{\ensuremath{\epsilon}\xspace}
\newcommand{\graph}{\ensuremath{G}\xspace}
\newcommand{\vertices}{\ensuremath{\mathcal V}\xspace}
\newcommand{\aVertex}{\ensuremath{v}\xspace}
\newcommand{\bVertex}{\ensuremath{w}\xspace}
\newcommand{\aSource}{\ensuremath{s}\xspace}
\newcommand{\aTarget}{\ensuremath{t}\xspace}
\newcommand{\edges}{\ensuremath{\mathcal E}\xspace}
\newcommand{\edge}{\ensuremath{e}\xspace}
\newcommand{\transferTime}{\ensuremath{\tau_{\textsf{t}}}\xspace}
\newcommand{\transferTimeDiff}{\ensuremath{\Delta_{\textsf{t}}}\xspace}
\newcommand{\finalTransferTime}{\ensuremath{\tau_{\textsf{f}}}\xspace}
\newcommand{\aJourney}{\ensuremath{J}\xspace}
\newcommand{\journeys}{\ensuremath{\mathcal{J}}\xspace}
\newcommand{\restrictedJourneys}{\ensuremath{\journeys_R}\xspace}
\newcommand{\anchorJourney}{\ensuremath{A}\xspace}
\newcommand{\anchorJourneys}{\ensuremath{\journeys_A}\xspace}
\newcommand{\parent}{\text{parent}\xspace}
\newcommand{\children}{\text{children}\xspace}
\newcommand{\candidateKey}{\ensuremath{\atime_\text{c}}\xspace}
\newcommand{\witnessLimit}{\ensuremath{\bar{\atime}}\xspace}
\newcommand{\intermediateWitnessLimit}{\ensuremath{\witnessLimit_1}\xspace}
\newcommand{\finalWitnessLimit}{\ensuremath{\witnessLimit_2}\xspace}
\newcommand{\originStopEvent}{\ensuremath{\stopEvent_\text{o}}\xspace}
\newcommand{\destinationStopEvent}{\ensuremath{\stopEvent_\text{d}}\xspace}
\newcommand{\targetStopEvent}{\ensuremath{\stopEvent_\text{t}}\xspace}
\newcommand{\arrivalWeight}{\ensuremath{w_\text{arr}}\xspace}
\newcommand{\transferWeight}{\ensuremath{w_{\textsf{t}}}\xspace}
\newcommand{\reachedIndex}{\ensuremath{r}\xspace}
\newcommand{\tripBasedEdges}{\ensuremath{\mathcal{E}^{\textsf{t}}}\xspace}
\newcommand{\augmentedEdges}{\ensuremath{\tripBasedEdges_\text{aug}}\xspace}
\newcommand{\aSet}[1]{\{\mkern1mu#1\mkern1mu\}}
\newcommand{\aQueue}{\ensuremath{Q}\xspace}
\DeclareMathOperator{\pred}{pred}
\newcommand{\arrivalSlack}{\ensuremath{\sigma_\text{arr}}\xspace}
\newcommand{\tripSlack}{\ensuremath{\sigma_\text{tr}}\xspace}
\newcommand{\forwardReachedIndex}{\ensuremath{\overrightarrow{\reachedIndex}}\xspace}
\newcommand{\backwardReachedIndex}{\ensuremath{\overleftarrow{\reachedIndex}}\xspace}
\newcommand{\forwardArrivalTime}{\ensuremath{\overrightarrow{\arrivalTime}}\xspace}
\newcommand{\backwardDepartureTime}{\ensuremath{\overleftarrow{\departureTime}}\xspace}
\newcommand{\gs}{\hphantom{\tiny$\cdot$}}
\newcommand{\legend}[1]{\raisebox{0.07ex}{#1}}
\definecolor{KITgreen}          {rgb}{0,    0.588,0.509}
\definecolor{KITgreen70}        {rgb}{0.3,  0.711,0.656}
\definecolor{KITgreen50}        {rgb}{0.5,  0.794,0.754}
\definecolor{KITgreen30}        {rgb}{0.7,  0.876,0.852}
\definecolor{KITgreen15}        {rgb}{0.85, 0.938,0.926}
\definecolor{KITblue}           {rgb}{0.274,0.392,0.666}
\definecolor{KITblue70}         {rgb}{0.492,0.574,0.766}
\definecolor{KITblue50}         {rgb}{0.637,0.696,0.833}
\definecolor{KITblue30}         {rgb}{0.782,0.817,0.9}
\definecolor{KITblue15}         {rgb}{0.891,0.908,0.95}
\definecolor{KITpalegreen}      {rgb}{0.509,0.745,0.235}
\definecolor{KITpalegreen70}    {rgb}{0.656,0.821,0.464}
\definecolor{KITpalegreen50}    {rgb}{0.754,0.872,0.617}
\definecolor{KITpalegreen30}    {rgb}{0.852,0.923,0.77}
\definecolor{KITpalegreen15}    {rgb}{0.926,0.961,0.885}
\definecolor{KITyellow}         {rgb}{0.98, 0.901,0.078}
\definecolor{KITyellow70}       {rgb}{0.986,0.931,0.354}
\definecolor{KITyellow50}       {rgb}{0.99, 0.95, 0.539}
\definecolor{KITyellow30}       {rgb}{0.994,0.97, 0.723}
\definecolor{KITyellow15}       {rgb}{0.997,0.985,0.861}
\definecolor{KITorange}         {rgb}{0.862,0.627,0.117}
\definecolor{KITorange70}       {rgb}{0.903,0.739,0.382}
\definecolor{KITorange50}       {rgb}{0.931,0.813,0.558}
\definecolor{KITorange30}       {rgb}{0.958,0.888,0.735}
\definecolor{KITorange15}       {rgb}{0.979,0.944,0.867}
\definecolor{KITbrown}          {rgb}{0.627,0.509,0.196}
\definecolor{KITbrown70}        {rgb}{0.739,0.656,0.437}
\definecolor{KITbrown50}        {rgb}{0.813,0.754,0.598}
\definecolor{KITbrown30}        {rgb}{0.888,0.852,0.758}
\definecolor{KITbrown15}        {rgb}{0.944,0.926,0.879}
\definecolor{KITred}            {rgb}{0.627,0.117,0.156}
\definecolor{KITred70}          {rgb}{0.739,0.382,0.409}
\definecolor{KITred50}          {rgb}{0.813,0.558,0.578}
\definecolor{KITred30}          {rgb}{0.888,0.735,0.747}
\definecolor{KITred15}          {rgb}{0.944,0.867,0.873}
\definecolor{KITlilac}          {rgb}{0.627,0,    0.47}
\definecolor{KITlilac70}        {rgb}{0.739,0.3,  0.629}
\definecolor{KITlilac50}        {rgb}{0.813,0.5,  0.735}
\definecolor{KITlilac30}        {rgb}{0.888,0.7,  0.841}
\definecolor{KITlilac15}        {rgb}{0.944,0.85, 0.92}
\definecolor{KITcyanblue}       {rgb}{0.313,0.666,0.901}
\definecolor{KITcyanblue70}     {rgb}{0.519,0.766,0.931}
\definecolor{KITcyanblue50}     {rgb}{0.656,0.833,0.95}
\definecolor{KITcyanblue30}     {rgb}{0.794,0.9,  0.97}
\definecolor{KITcyanblue15}     {rgb}{0.897,0.95, 0.985}
\definecolor{KITseablue}        {rgb}{0.196,0.313,0.549}
\definecolor{KITseablue70}      {rgb}{0.437,0.519,0.684}
\definecolor{KITseablue50}      {rgb}{0.598,0.656,0.774}
\definecolor{KITseablue30}      {rgb}{0.758,0.794,0.864}
\definecolor{KITseablue15}      {rgb}{0.879,0.897,0.932}
\definecolor{KITblack}          {rgb}{0,    0,    0}
\definecolor{KITblack90}        {rgb}{0.1,  0.1,  0.1}
\definecolor{KITblack80}        {rgb}{0.2,  0.2,  0.3}
\definecolor{KITblack75}        {rgb}{0.25, 0.25, 0.25}
\definecolor{KITblack70}        {rgb}{0.3,  0.3,  0.3}
\definecolor{KITblack60}        {rgb}{0.4,  0.4,  0.4}
\definecolor{KITblack50}        {rgb}{0.5,  0.5,  0.5}
\definecolor{KITblack40}        {rgb}{0.6,  0.6,  0.6}
\definecolor{KITblack30}        {rgb}{0.7,  0.7,  0.7}
\definecolor{KITblack25}        {rgb}{0.75, 0.75, 0.75}
\definecolor{KITblack20}        {rgb}{0.8,  0.8,  0.8}
\definecolor{KITblack10}        {rgb}{0.9,  0.9,  0.9}
\definecolor{KITwhite}          {rgb}{1,    1,    1}
\definecolor{LIPICSorange}      {rgb}{0.988,0.78 ,0.07}
\colorlet{nodeColor}{black!80}
\colorlet{edgeColor}{black!50}
\colorlet{axisColor}{black!80}
\colorlet{legendColor}{black!80}
\tikzstyle{vertex}=[circle,line width=.5pt,minimum size=0.1pt]
\tikzstyle{routeArrow}=[->, >=stealth]
\tikzstyle{gridStyle}=[axisColor!20, line width = 0.2pt, dash pattern = on 2pt off 1pt]
\tikzstyle{axisStyle}=[axisColor, line width=0.5pt]
\tikzstyle{markSign} = [mark=o]
\tikzstyle{shortenLines} = [shorten <= 3.4pt,shorten >= 3.4pt]
\tikzstyle{pointsFull} = [mark=*, mark size=1.8pt, line width=1.2pt, only marks]
\tikzstyle{sampleFull} = [mark=*, mark size=1.8pt, line width=1.2pt, shorten <= 3.4pt,shorten >= 3.4pt]
\tikzstyle{legendFull} = [mark=markFull, mark size=1.8pt, line width=1.2pt, shorten <= -1pt,shorten >= -1pt]
\tikzstyle{pointsCirc} = [mark=o, mark size=1.8pt, line width=1.2pt, only marks]
\tikzstyle{sampleCirc} = [mark=o, mark size=1.8pt, line width=1.2pt, shorten <= 3.4pt,shorten >= 3.4pt]
\tikzstyle{legendCirc} = [mark=markCirc, mark size=1.8pt, line width=1.2pt, shorten <= -1pt,shorten >= -1pt]
\begin{document}
\newcommand\relatedversion{}
\renewcommand\relatedversion{\thanks{This research was funded by the DFG under grant number WA 654123-2.}}

\title{\Large Fast Multimodal Journey Planning for Three Criteria\relatedversion}
\author{Moritz Potthoff\,\thanks{\texttt{moritz.potthoff@student.kit.edu}, Karlsruhe Institute of Technology} \and Jonas Sauer\,\thanks{\texttt{jonas.sauer2@kit.edu}, Karlsruhe Institute of Technology}}

\date{}

\maketitle

\begin{abstract} \small\baselineskip=9pt
We study the journey planning problem for multimodal networks consisting of public transit and a non-schedule-based transfer mode (e.g., walking, bicycle, e-scooter).
So far, all efficient algorithms for this problem either restrict usage of the transfer mode or Pareto-optimize only two criteria: arrival time and the number of used public transit trips.
However, we show that both limitations must be lifted in order to obtain high-quality solutions.
In particular, the time spent using the (unrestricted) transfer mode must be optimized as a third criterion.
We present McTB, the first algorithm that optimizes three criteria efficiently by avoiding costly data structures for maintaining Pareto sets.
To enable unlimited transfers, we combine it with a three-criteria extension of the ULTRA~\cite{Bau19b,Sau20b} preprocessing technique.
Furthermore, since full Pareto sets become impractically large for more than two criteria, we adapt an approach by Delling et al.~\cite{Del19} to restrict the Pareto set in a methodical manner.
Extensive experiments on real-world data show that our algorithms are fast enough for interactive queries even on large country-sized networks.
Compared to the state of the art for multicriteria multimodal journey planning, MCR~\cite{Del13}, we achieve a speedup of up to 80.
\end{abstract}

\section{Introduction}
\label{sec:introduction}
With the emergence of new, flexible transport options such as rental bikes and e-scooters, the ability to plan multimodal journeys that combine different transportation modes is becoming more important than ever.
In this work, we consider bimodal transportation networks, consisting of public transit and a \emph{transfer graph} which represents a secondary, non-schedule-based transfer mode (e.g., walking, bicycle, e-scooter).
While many efficient journey planning algorithms have been developed for both road and public transit networks~\cite{Bas16b}, solving the combined multimodal problem is more challenging.
Fully multimodal algorithms such as MCR~\cite{Del13} are slow because they explore the transfer graph with costly Dijkstra searches.
Some public transit algorithms support walking as a transfer mode~\cite{Del12,Del15b,Wit15,Dib13,Dib18}, but only in a limited capacity, for example by requiring the transfer graph to be transitively closed.
Unfortunately, such restrictions lead to much longer travel times for many queries and therefore significantly worsen the solution quality~\cite{Wag17,Pha19}.
A recently proposed approach for lifting these restrictions is ULTRA~\cite{Bau19b,Sau20b}.
It enables any public transit algorithm to support unlimited transfers by precomputing a small set of \emph{transfer shortcuts}.

Like many public transit algorithms, ULTRA supports Pareto optimization of two criteria: arrival time and the number of used public transit trips.
Round-based algorithms such as RAPTOR~\cite{Del12,Del15b} and Trip-Based Routing~(TB)~\cite{Wit15} can handle the second criterion efficiently by avoiding explicit representation of Pareto sets.
By contrast, algorithms for more than two criteria (e.g., McRAPTOR~\cite{Del12,Del15b}) require expensive dynamic data structures to maintain the Pareto sets and are therefore too slow for interactive applications.

For public transit, it is commonly accepted that optimizing only the arrival time is not enough to obtain high-quality solutions.
The number of trips measures the discomfort associated with using public transit:
Since changing vehicles is cumbersome, many passengers will accept a slightly later arrival time to save an additional trip.
With an unlimited transfer mode, we argue that it becomes necessary to also minimize the \emph{transfer time}, i.e., the time that is spent using the transfer mode\footnote{Note that our definition of transfer time excludes the waiting time that may occur when transferring between two vehicles.}.
This is not as important if transfers are restricted, since most journeys will have a low transfer time anyway.
With unlimited transfers, however, journeys that are optimal according to the other two criteria may require an excessive amount of transfer time.
As shown by our experiments in Section~\ref{sec:experiments}, there are often alternatives which are only slightly worse in the other two criteria but require less transfer time.
Such alternatives will be missed when optimizing only two criteria, even with restricted transfers.
Computing satisfactory solutions in a multimodal context therefore requires a third criterion that measures the discomfort associated with the transfer mode.

\subparagraph*{Our Contribution.} To enable efficient three-criteria optimization, we introduce McTB (Section~\ref{sec:full-query-algorithms}), the first algorithm which can optimize a third criterion (e.g., transfer time) without explicit representation of Pareto sets.
Our algorithm is based on TB, which operates on \emph{stop events}, i.e., visits of public transit vehicles at stops.
Since each stop event is associated with a particular arrival time, this removes the need to maintain arrival times explicitly.
McTB builds on this idea by tracking the currently best value for the third criterion at each stop event.
Combined with a round-based exploration of the network, this makes the third criterion the only one whose value must be tracked explicitly.
Consequently, McTB does not require costly dynamic data structures in order to maintain Pareto sets, except for a single set of solutions at the target vertex.

TB requires a preprocessing phase which precomputes transfers between stop events.
As shown in~\cite{Sau20b}, this preprocessing phase can be replaced with a modified version of ULTRA, thereby enabling unlimited transfers.
To apply this to McTB, we develop McULTRA, a three-criteria extension of ULTRA (Section~\ref{sec:shortcut-computation}).
Analogously to ULTRA, McULTRA can be combined with any three-criteria public transit algorithm which normally requires transitively closed transfers.

For three or more criteria, the number of Pareto-optimal journeys becomes impractically large~\cite{Del13,Bas13,Del19}.
Besides causing high query times, this makes it difficult for users to choose between many similar alternatives.
Several techniques have been proposed for filtering the Pareto set~\cite{Del13,Bas13}.
\emph{Restricted Pareto sets}~\cite{Del19} are of particular interest because they can be computed much faster than full Pareto sets, but their definition is independent of the algorithm used to compute them.
Delling et al.~\cite{Del19} introduced BM-RAPTOR, a (Mc)RAPTOR-based algorithm which computes restricted Pareto sets in a network with limited transfers.
In Section~\ref{sec:restricted-query-algorithms}, we show that only minor changes are necessary to make BM-RAPTOR utilize McULTRA shortcuts, thus enabling fast computation of restricted Pareto sets in a multimodal network.
To achieve even faster query times, we re-engineer the pruning scheme of BM-RAPTOR to support (Mc)TB as the underlying query algorithm.

Section~\ref{sec:experiments} presents an experimental evaluation on real-world multimodal networks.
We show that the number of required McULTRA shortcuts remains low when adding transfer time as a third criterion.
Combining McULTRA and McTB yields a speedup of 6--8 compared to MCR, the fastest previously known multimodal algorithm for three criteria.
ULTRA-BM-TB, our new algorithm for restricted Pareto sets, achieves interactive query times even on large networks, yielding a speedup of 30--80 compared to MCR.

\section{Preliminaries}
\label{sec:preliminaries}
Following the notation in~\cite{Bau19b} and~\cite{Sau20b}, a public transit network is a 4-tuple~$(\stops,\trips,\routes,\graph)$ consisting of a set of \emph{stops}~\stops, a set of \emph{trips}~\trips, a set of \emph{routes}~\routes and a directed, weighted~\emph{transfer graph}~$\graph=(\vertices,\edges)$.
Each stop~$\astop\in\stops$ represents a location in the network where passengers can enter~or~exit a vehicle.
A trip~$\aTrip=\langle\stopEvent_0,\dots,\stopEvent_k\rangle\in\trips$ is a sequence of \emph{stop events} performed by a single vehicle.
Each stop event~$\stopEvent_i$ represents a visit of the vehicle at a stop~$\astop(\stopEvent_i)\in\stops$ with \emph{arrival time}~$\arrivalTime(\stopEvent_i)$ and \emph{departure time}~$\departureTime(\stopEvent_i)$.
If a \emph{departure buffer time} must be observed before entering~\aTrip via~$\stopEvent_i$, this can be handled implicitly by reducing~$\departureTime(\stopEvent_i)$ accordingly~\cite{Zuen20}.
The~$i$-th stop event of a trip~\aTrip is denoted by~$\aTrip[i]$ and the number of stop events in the trip by~$|\aTrip|$.
The routes~\routes represent a partition of the trips such that all trips of a route share the same stop sequence and no trip overtakes another along this sequence.
For two trips~$\aTrip_1,\aTrip_2$ of the same route~$\aRoute\in\routes$, we write~$\aTrip_1\preceq\aTrip_2$ if~$\arrivalTime(\aTrip_1[i])\leq\arrivalTime(\aTrip_2[i])$ for every index~$i$ along~\aRoute.
If~$\aTrip_1\preceq\aTrip_2$ and there is at least one index~$j$ with~$\arrivalTime(\aTrip_1[j])<\arrivalTime(\aTrip_2[j])$, we write~$\aTrip_1\prec\aTrip_2$.
Note that trips from different routes are not comparable via~$\preceq$ and~$\prec$.
The unrestricted transfer graph~$\graph=(\vertices,\edges)$ consists of a set of \emph{vertices}~$\vertices$ with~$\stops\subseteq\vertices$, and a set of \emph{edges}~$\edges\subseteq\vertices\times\vertices$.
For each edge~$\edge=(\aVertex,\bVertex)$, the \emph{transfer time}~$\transferTime(\edge)$ is the time required to travel from~\aVertex to~\bVertex along~\edge.

Given a source vertex~$\aSource\in\vertices$ and a target vertex~$\aTarget\in\vertices$, an \aSource-\aTarget-\emph{journey} represents the movement of a passenger from~\aSource to~\aTarget through the network.
It consists of an alternating sequence of \emph{trip legs} (i.e., subsequences of trips) and \emph{transfers} (i.e., paths in the transfer graph, which may be empty).
The transfer between~\aSource and the first trip leg is called the \emph{initial transfer}, whereas the \emph{final transfer} connects the final trip leg to~\aTarget.
The remaining transfers, which connect two trip legs, are called \emph{intermediate transfers}.

\subparagraph*{Problem Statement.}
An ~\aSource-\aTarget-journey~\aJourney is evaluated according to the three criteria arrival time~$\arrivalTime(\aJourney)$ at~\aTarget, number of used trips~$|\aJourney|$ (i.e., the number of trip legs), and transfer time~$\transferTime(\aJourney)$ (i.e., the total time spent traversing the transfer graph).
A journey~\aJourney \emph{weakly dominates} another journey~$\aJourney'$ if~\aJourney is not worse than~$\aJourney'$ according to any of the three criteria.
Moreover,~\aJourney \emph{strongly dominates}~$\aJourney'$ if~\aJourney weakly dominates~$\aJourney'$ and~\aJourney is strictly better than~$\aJourney'$ according to at least one criterion.

Given source and target vertices~$\aSource,\aTarget\in\vertices$ and a departure time~\departureTime, the objective is to compute a (full or restricted) \emph{Pareto set} of \aSource-\aTarget-journeys that depart no earlier than~$\departureTime$.
A (full) Pareto set~\journeys is a set of minimal size such that every valid journey is weakly dominated by a journey in the set.
An \emph{anchor Pareto set}~\anchorJourneys is a Pareto set considering only the two criteria arrival time and number of trips.
Each journey~$\aJourney\in\journeys$ in the full Pareto set has a corresponding \emph{anchor journey}~$\anchorJourney(\aJourney)$, which is the journey in~\anchorJourneys with the highest number of trips not greater than~$|\aJourney|$.
Given a \emph{trip slack}~$\tripSlack\geq{}1$ and an \emph{arrival slack}~$\arrivalSlack\geq{}1$, the \emph{restricted Pareto set}~\cite{Del19} is defined as
\begin{align*}
\restrictedJourneys:=\{ \aJourney\in\journeys\mid|\aJourney|&\leq|\anchorJourney(\aJourney)|\cdot\tripSlack\text{ and }\\ \arrivalTime(\aJourney)-\departureTime&\leq(\arrivalTime(\anchorJourney(\aJourney))-\departureTime)\cdot\arrivalSlack \}.
\end{align*}
The restricted Pareto set contains every journey from the full Pareto set whose arrival time and number of trips do not exceed their respective slack compared to its anchor journey.
In~\cite{Del19}, the restricted Pareto set was defined using absolute slack values (e.g., $\arrivalSlack=60$\,min).
We argue that it is more natural to use relative slacks, which depend on the length of the anchor journey.
Passengers are more willing to take long detours if the journey is already long, whereas a 60-minute detour on a 15-minute journey is not attractive.
All presented algorithms for computing restricted Pareto sets can be easily modified to support absolute slacks instead.

\subparagraph*{Algorithms.}
We give a brief overview of the algorithms we build on.
RAPTOR~\cite{Del12,Del15b} is a round-based algorithm which optimizes arrival time and number of trips in a network with transitively closed transfers.
Round~$i$ finds journeys with exactly~$i$ trips by performing two steps:
First, all routes visiting stops whose earliest arrival time was improved in round~$i-1$ are collected and scanned.
Afterwards, edges in the transfer graph are relaxed.
McRAPTOR extends RAPTOR for an arbitrary number of additional criteria.
For each stop and round, it maintains a \emph{bag} of Pareto-optimal labels instead of a single earliest arrival time.

Bounded McRAPTOR~(BM-RAPTOR)~\cite{Del19} is an extension of RAPTOR which computes restricted Pareto sets.
Three variants of BM-RAPTOR were proposed in~\cite{Del19}; we focus on the fastest variant (Tight-BMRAP).
It performs three phases:
The \emph{forward pruning search} is a two-criteria RAPTOR query which computes the \emph{earliest arrival time}~$\forwardArrivalTime(\astop,i)$ per stop~\astop and round~$i$, and thereby the anchor Pareto set~\anchorJourneys.
Then, a \emph{backward pruning search} is run for each anchor journey, in order of most used trips to fewest.
Collectively, these compute a \emph{latest departure time}~$\backwardDepartureTime(\astop,i)$ for each stop~\astop and round~$i$.
This is the latest departure time at~\astop such that~\aTarget can be reached with~$i$ remaining trips without exceeding the trip or arrival slack.
The backward pruning search for an anchor journey~$\aJourney\in\anchorJourneys$ is a reverse RAPTOR query which starts at~\aTarget with the arrival time~$\departureTime+(\arrivalTime(\aJourney)-\departureTime)\cdot\arrivalSlack$.
Let~$K$ denote the maximum number of trips of any journey in~\anchorJourneys.
The search is run for~$n = \min(|\aJourney'|-1,|\aJourney|\cdot\tripSlack)$ rounds, where~$\aJourney'$ is the journey in~\anchorJourneys with the next higher number of trips, or~$n=K\cdot\tripSlack$ rounds if~$|\aJourney|=K$.
When the backward search finds a departure at a stop~\astop in round~$i$ with departure time~\atime, it is discarded if~$\atime < \forwardArrivalTime(\astop, n-i)$.
Otherwise, it is written to~$\backwardDepartureTime(\astop,K \cdot\tripSlack-(n-i))$.
The latest departure times~$\backwardDepartureTime(\cdot,\cdot)$ are not reinitialized between backward searches.
Finally, the \emph{main search} is a McRAPTOR query from~\aSource to~\aTarget.
If it arrives at a stop~\astop in round~$i$ with arrival time~\atime, the arrival is discarded if~$\atime > \backwardDepartureTime(\astop, K\cdot\tripSlack - i)$.

A faster alternative to RAPTOR is Trip-Based Routing~(TB)~\cite{Wit15}, which requires precomputed transfers between stop events.
Instead of storing arrival times at stops, it maintains a \emph{reached index}~$\reachedIndex(\aTrip)$ for each trip~\aTrip, which is the index of the first stop along~\aTrip that has been reached via some trip~$\aTrip'\preceq\aTrip$.
Like \mbox{RAPTOR}, TB operates in rounds, where each round scans segments of trips which were newly reached in the previous round.
For each stop event along a scanned trip segment, outgoing transfers to other stop events and to the target vertex are relaxed.

ULTRA~\cite{Bau19b,Sau20b} is a preprocessing technique which allows any public transit algorithm that requires a transitively closed transfer graph to handle unlimited transfers instead.
To this end, it precalculates a set of \emph{transfer shortcuts} representing all intermediate transfers required to find a two-criteria Pareto set.
There are two variants of ULTRA:
The \emph{stop-to-stop} variant computes shortcuts between stops.
These are sufficient for most algorithms, including RAPTOR.
The \emph{event-to-event} variant computes shortcuts between stop events, enabling integration with TB~\cite{Sau20b}.
Both variants work by enumerating journeys with at most two trips.
Journeys that consist of two trips connected by an intermediate transfer are called \emph{candidates}, while all other journeys are called \emph{witnesses}.
If a candidate is not dominated by any witness, a shortcut representing its intermediate transfer is added.
An ULTRA query uses Bucket-CH~\cite{Kno07,Gei08,Gei12} to explore initial and final transfers.
Afterwards, a public transit algorithm of choice is run, using the transfer shortcuts as the transfer graph.

\section{McULTRA Shortcut Computation}
\label{sec:shortcut-computation}
To enable unlimited transfers while optimizing a third criterion, we propose McULTRA, an adaptation of ULTRA.
As with original ULTRA, we present two variants:
The stop-to-stop variant can be combined with any algorithm that requires transitively closed transfers, using the same query framework as ULTRA.
The event-to-event variant is intended for combination with McTB, which we present in Section~\ref{sec:full-query-algorithms}.
Since the shortcut computation is mostly identical for both variants, we mainly describe the event-to-event variant and mention differences where appropriate.

The basic outline of the shortcut computation remains unchanged:
For every stop~$\aSource\in\stops$, the algorithm collects all departure times of trips at~\aSource and then runs an \emph{iteration} of McRAPTOR for each departure time in descending order.
The McRAPTOR iterations are restricted to two rounds (each consisting of route scans followed by transfer relaxations), and vertex bags are not cleared between them.
During the final transfer relaxation phase of each iteration, labels representing undominated candidates are extracted and shortcuts are inserted for them.
Adding a third criterion requires the use of McRAPTOR instead of RAPTOR, and therefore vertex bags instead of mere arrival times.
In the following, we adapt the optimizations included in the original ULTRA shortcut computation and describe additional optimizations for the three-criteria setting.
Several optimizations rely on the following simple observation:

\begin{lemma}\label{lemma:stop-event}
    For a fixed number of trips~$i$ and a fixed stop event~$\stopEvent$, the three-criteria Pareto set contains at most one journey with~$i$ trips that uses~\stopEvent as the final stop event.
\end{lemma}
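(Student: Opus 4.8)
The plan is to argue by contradiction: suppose two distinct Pareto-optimal journeys $\aJourney_1$ and $\aJourney_2$ both use $i$ trips and both have $\stopEvent$ as their final stop event. I would then show that one of them weakly dominates the other, contradicting the minimality of a Pareto set (which forbids one member weakly dominating another distinct member).

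The key observation is that the final stop event $\stopEvent$ pins down two of the three criteria. First, $\stopEvent$ is fixed, so $\arrivalTime(\stopEvent)$ is a fixed value; both journeys therefore reach the stop $\astop(\stopEvent)$ at the same time via their final trip leg, and—since the final transfer is the same path from $\astop(\stopEvent)$ to $\aTarget$ in both cases (it is the optimal/unique way to complete a journey ending at $\stopEvent$, and in the event-to-event setting the relevant completion is determined)—the arrival times $\arrivalTime(\aJourney_1)$ and $\arrivalTime(\aJourney_2)$ at $\aTarget$ coincide. Second, by hypothesis $|\aJourney_1| = |\aJourney_2| = i$. Hence the two journeys agree on the arrival-time criterion and on the trip-count criterion, and can differ only in the transfer time $\transferTime(\cdot)$. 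Whichever of the two has the smaller (or equal) transfer time then weakly dominates the other.

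The main obstacle is making precise the claim that the arrival time at $\aTarget$ is determined by $\stopEvent$ alone. In the stop-to-stop setting this is immediate once one fixes the convention that the final transfer is a shortest path from $\astop(\stopEvent)$ to $\aTarget$ in the transfer graph, so I would first fix that convention (or, equivalently, observe that if two Pareto-optimal journeys ended at $\stopEvent$ but took different-length final transfers, the longer one is strongly dominated and cannot be Pareto-optimal). The remaining subtlety is the tie-breaking inherent in the definition of a Pareto set: two journeys that agree on all three criteria are mutually weakly dominating, so a minimal-size Pareto set keeps only one representative; this is exactly what rules out the two-journey scenario. Once these conventions are stated, the argument is a one-line domination check, so I expect the write-up to be short.
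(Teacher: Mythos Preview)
Your proposal is correct and follows the same approach as the paper: fix the arrival-time and trip-count criteria via~$\stopEvent$ and~$i$, then observe that the journey minimizing the third criterion weakly dominates all others. The paper's proof is even terser---it simply asserts that such journeys are equivalent in arrival time and number of trips without your final-transfer discussion, since in context the lemma is applied to (partial) journeys whose arrival time is measured at~$\stopEvent$ itself.
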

\begin{proof}
    All journeys with~$i$ trips and final stop event~\stopEvent are equivalent according to arrival time and number of trips.
    The best journey according to the third criterion therefore dominates the others.
\end{proof}

\subparagraph{Route Scans.}
When scanning a route, RAPTOR maintains a single \emph{active trip} along the route, which is the earliest trip that can be entered so far.
In McRAPTOR, the active trip is replaced with a \emph{route bag} containing all Pareto-optimal labels, each of which has its own active trip.
No two labels can have the same active trip:
When exiting at any stop along the route, they will share the same final stop event, and by Lemma~\ref{lemma:stop-event}, the label that is better in the third criterion will dominate the other one.

Bags are always required in the second route scanning phase of each McRAPTOR iteration.
For the first route scan, the simpler RAPTOR variant with a single active trip is sufficient as long as the third criterion only affects the transfer graph (which is the case for transfer time).
In this case, because candidates do not have an initial transfer, all candidates labels will have value~$0$ in the third criterion during the first route scan.
Therefore, the candidate with the earliest arrival time dominates all others.
While the same is not true for witnesses, it is not necessary to find all witnesses since failing to find one may only lead to superfluous shortcuts.

\subparagraph{Dijkstra Searches.}
The single-criterion Dijkstra searches of ULTRA are replaced with a standard multicriteria Dijkstra.
Given an \emph{arrival weight}~$\arrivalWeight>0$ and a \emph{transfer weight}~$\transferWeight>0$ (assuming the third criterion is transfer time), a label with arrival time~\arrivalTime and transfer time~\transferTime has a key of~$(\arrivalWeight\cdot\arrivalTime+\transferWeight\cdot\transferTime)/(\arrivalWeight+\transferWeight)$ in the priority queue.
It is also possible to assign a weight of~$0$ to one of the two criteria as long as it is used as a tiebreaker in the case of equality.
This ensures that the order in which labels are settled does not conflict with Pareto dominance, i.e., if label~$\ell_1$ is settled before label~$\ell_2$,~$\ell_2$ may not dominate~$\ell_1$.
The stopping criteria for the Dijkstra searches used by the original ULTRA shortcut computation can be carried over to McULTRA:
The final Dijkstra search of each iteration is stopped once all candidates have been extracted from the queue.
The stopping criterion for the intermediate Dijkstra search is based on the \emph{intermediate witness limit}~\intermediateWitnessLimit.
If~\candidateKey is the key of the last candidate extracted from the queue, the search is stopped once the key of the queue head exceeds~$\candidateKey+\intermediateWitnessLimit$.

\subparagraph{Parent Pointers.}
Two-criteria ULTRA uses per-stop parent pointers to extract shortcuts and to distinguish between candidates and witnesses.
This is no longer sufficient for three criteria since a stop may now have multiple candidate labels which represent different shortcuts.
By Lemma~\ref{lemma:stop-event}, each stop \emph{event} may have at most one candidate.
Thus, McULTRA maintains two pointers~$\parent_1[\stopEvent]$ and~$\parent_2[\stopEvent]$ per stop event~\stopEvent, where~$\parent_k[\stopEvent]$ is the parent stop event for reaching~\stopEvent with~$k$ trips.
Each candidate label includes a pointer to the last stop event where a trip was entered or exited.
To distinguish them from candidates, witness labels set this pointer to~$\bot$. 
When a candidate journey with final stop event~\targetStopEvent is found, the corresponding shortcut can be extracted as~$(\parent_1[\parent_2[\targetStopEvent]], \parent_2[\targetStopEvent])$.

A crucial optimization of ULTRA is that once a shortcut is added to the result, all remaining candidates which represent the same shortcut are turned into witnesses.
To achieve this efficiently, each stop event~\stopEvent maintains a list of child stop events:

\[\children_2[\stopEvent]:=\{\targetStopEvent\mid\parent_2[\targetStopEvent]=\stopEvent\}\]

Once a shortcut~$(\originStopEvent,\destinationStopEvent)$ is inserted, every child~$\targetStopEvent\in\children_2[\destinationStopEvent]$ is turned into a witness by setting~$\parent_2[\targetStopEvent]=\bot$.
Consequently, a label that points to a stop event~$\targetStopEvent\neq\bot$ is only considered a candidate if~$\parent_2[\targetStopEvent]\neq\bot$.
After the final Dijkstra search is stopped, there may be former candidate labels left in the queue which point to a valid final stop event~\targetStopEvent but for which~$\parent_2[\targetStopEvent]$ is~$\bot$.
These labels must be retained as witnesses for later McRAPTOR iterations, but their stop event pointers are set to~$\bot$.
Otherwise, they might be misidentified as candidates if a later iteration sets~$\parent_2[\targetStopEvent]$ to a valid stop event again.

\subparagraph{Final Transfer Pruning.}
The event-to-event variant of original ULTRA uses a weaker pruning rule for candidates than the stop-to-stop variant:
A candidate is discarded if it is strongly dominated by a witness or weakly dominated by a candidate, but not if it is only weakly dominated by a witness.
This difference is carried over to McULTRA.
Preliminary experiments showed that the final Dijkstra search takes much longer in the event-to-event variant than in the stop-to-stop variant, which is not the case for original ULTRA.
The reason for this is that the stopping criterion is applied later due to undominated candidate labels with a very high key.
These labels are only weakly dominated by equivalent labels which were candidates in previous iterations.
To remedy this, we distinguish between \emph{proper candidates}, which are not dominated by any journey, and \emph{improper candidates}, which are weakly dominated by a witness.
We make the stopping criterion of the final Dijkstra search stricter by introducing a \emph{final witness limit}~$\finalWitnessLimit$.
Let~$\candidateKey$ be the key of the last proper candidate extracted from the queue.
The search is stopped once the key of the queue head exceeds~$\candidateKey+\finalWitnessLimit$.
Afterwards, all remaining improper candidates are removed from the queue and shortcuts are inserted for them.

\section{McTB Query Algorithm}
\label{sec:full-query-algorithms}
As a faster alternative to McRAPTOR, we propose McTB, a new three-criteria query algorithm based on TB.
The original preprocessing phase of TB computes event-to-event transfers based on a limited transfer graph.
Since we focus on a multimodal scenario, we use a set~\tripBasedEdges of event-to-event McULTRA shortcuts instead and obtain a multimodal algorithm, ULTRA-McTB.
Pseudocode for the trip enqueuing and scanning procedures of McTB is given in Algorithm~\ref{alg:mc-tb}.
For ease of exposition, we assume that the third criterion optimized by McTB is transfer time.
However, McTB supports any third criterion whose values can be totally ordered.

McTB replaces the reached indices~$\reachedIndex(\cdot)$ used by TB with a transfer time label~$\transferTime(\aTrip,i)$ for each stop event~$\aTrip[i]$ of~$\aTrip$, which represents the minimum transfer time needed to reach~$\aTrip[i]$ (or~$\infty$ if~$\aTrip[i]$ is not reachable).
By Lemma~\ref{lemma:stop-event}, the journey represented by~$\transferTime(\aTrip,i)$ is not dominated by any other journeys ending at~$\aTrip[i]$ found so far.
Two invariants are upheld for~$\transferTime(\aTrip,i)$:
For each~$\aTrip'\succ\aTrip$, $\transferTime(\aTrip',i)\leq\transferTime(\aTrip,i)$ holds since a passenger can reach~$\aTrip'[i]$ by traveling to~$\aTrip[i]$ and waiting.
Similarly, remaining seated in a trip does not increase the transfer time, so~$\transferTime(\aTrip,j)\leq\transferTime(\aTrip,i)$ holds for each~$i<j<|\aTrip|$.

\subparagraph*{Initial Transfer Evaluation.}
Like every ULTRA query, ULTRA-McTB explores initial and final transfers with a Bucket-CH search.
This yields the minimal arrival time~$\arrivalTime(\aSource,\stopa)$ for each stop~\stopa reachable via an initial transfer, and the minimal target transfer time~$\finalTransferTime(\stopb,\aTarget)$ for each stop~\stopb from which~\aTarget is reachable via a final transfer.
Afterwards, the algorithm identifies trips that can be entered via an initial transfer.
This is done as in the original TB query:
For each stop~\astop reachable via an initial transfer and each route visiting~\astop, the earliest reachable trip at~\astop is found via binary search and collected for the first trip scanning phase via the~$\enqueue$ procedure.
Note that two-criteria \mbox{ULTRA-TB} replaces this step with RAPTOR-like route scans, exploiting the fact that the earliest reachable trip of a route never increases along its stop sequence.
However, once a third criterion is introduced, it is no longer sufficient to consider the earliest reachable trip of the route, since entering an earlier trip at a later stop may increase the transfer time.

\subparagraph*{Trip Enqueuing.}
Like the original TB query, the McTB query works in rounds, where round~$n$ scans trip segments that were collected in queue~$\aQueue_n$ during round~$n-1$.
A trip segment~$(\aTrip,j,k,\atime)$ represents the subsequence of~\aTrip from~$\aTrip[j]$ to~$\aTrip[k]$.
Here, the trip segment also stores the transfer time~\atime with which it was reached.
Trip segments are inserted into the queue by the~$\enqueue$ procedure~(lines~\ref{alg:enqueue_begin}--\ref{alg:enqueue_update_end}).
When a trip~$\aTrip$ is entered with transfer time~\atime at index~$i$, the first index where it may be exited is~$j:=i+1$.
If~$\transferTime(\aTrip,j)\leq\atime$ (line~\ref{alg:enqueue_check}),~\aTrip does not improve the transfer time at~$\aTrip[j]$ or any later stop event, so the trip segment is discarded.
Otherwise, the end of the trip segment is set to the last index~$k$ for which~$\transferTime(\aTrip,k)>\atime$ holds~(line~\ref{alg:enqueue_find_end}).
Finally, the trip segment is added to the queue~(line~\ref{alg:enqueue_queue}) and the transfer time labels are updated to~\atime, satisfying the two invariants~(lines~\ref{alg:enqueue_update_begin}--\ref{alg:enqueue_update_end}).

\subparagraph*{Trip Scanning.}
To prevent redundant scans, the trip scanning phase starts with a new pruning step~(lines~\ref{alg:segment_pruning_begin}--\ref{alg:segment_pruning_equal}).
Between enqueuing a trip segment~$x=(\aTrip,j,k,\atime)$ and extracting it from~$\aQueue_n$, another trip segment~$x'=(\aTrip',j',k',\atime')$ with~$\aTrip'\preceq\aTrip$, $j\leq{}j'\leq{}k$ and~$\atime'\leq\atime$ may have been enqueued.
In this case, scanning the portion of~$\aTrip$ between~$j'$ and~$k$ is redundant because it overlaps with~$x'$.
The pruning step identifies redundant portions of~$x$ and adjusts the end index~$k$ accordingly.
If such an~$x'$ exists, either~$\aTrip'\prec\aTrip$ or~$\atime'<\atime$ must hold, since otherwise the check in line~\ref{alg:enqueue_check} would have discarded~$x'$.
If~$\atime'<\atime$, then enqueuing~$x'$ ensures~$\transferTime(\aTrip,j')<\atime$, which is checked in line~\ref{alg:segment_pruning_smaller}.
If $\aTrip'\prec\aTrip$, the trip~$\pred(\aTrip)$ which immediately precedes~\aTrip in the route must have~$\transferTime(\pred(\aTrip),j')\leq\transferTime(\aTrip',j')\leq\atime'\leq\atime$.
This is checked in line~\ref{alg:segment_pruning_equal}.

Following the pruning step, final transfers are evaluated for all stops along the trip segment~$x=(\aTrip,j,k,\atime)$.
Line~\ref{alg:update_journeys} adds new journeys to the set~\journeys of Pareto-optimal journeys at~\aTarget, which is the only dynamic data structure maintained by the algorithm.
Outgoing transfers to other trips are then relaxed in lines~\ref{alg:transfers_begin}--\ref{alg:transfers_enqueue} and the reached trips are enqueued.
Before this is done, target pruning is applied in line~\ref{alg:transfers_pruning}.
Any journey with an intermediate transfer from~$x$ to another trip has an arrival time of at least~$\arrivalTime(\aTrip[j])$, uses at least~$n+1$ trips and has a transfer time of at least~$\atime$.
If a journey with the smallest possible value for all three criteria is dominated by the current Pareto set~\journeys, outgoing transfers do not need to be relaxed.
We make the dominance check more efficient by exploiting the fact that all journeys added to~\journeys by round~$n$ use at most~$n$ trips.
This makes it possible to maintain a \emph{best bag}~$\journeys^*\subseteq\journeys$ of journeys that are Pareto-optimal according to arrival time and transfer time (ignoring the number of trips).
Then, line~\ref{alg:transfers_pruning} only has to check if the journey is dominated by~$\journeys^*$ instead of~$\journeys$.

\begin{algorithm2e}[t]
    \caption{Trip enqueuing and scanning procedures of the McTB query.}\label{alg:mc-tb}
    \SetKwProg{myproc}{Procedure}{}{}
    \myproc{\enqueue{$\aTrip,j,\aQueue,\atime$}}{\label{alg:enqueue_begin}
        \lIf{$\transferTime(\aTrip,j)\leq\atime$}{\Return}\label{alg:enqueue_check}
        $k\leftarrow\max\{i\in\{j,\dots,|\aTrip|-1\}\mid\transferTime(\aTrip,i)>\atime\}$\;\label{alg:enqueue_find_end}
        $\aQueue\leftarrow\aQueue\cup\{(\aTrip,j,k,\atime)\}$\;\label{alg:enqueue_queue}
        \ForEach{$\aTrip'\succeq\aTrip$}{\label{alg:enqueue_update_begin}
            \For{$i$ from $j$ to $|\aTrip|-1$}{
                $\transferTime(\aTrip',i)\leftarrow\min(\atime,\transferTime(\aTrip',i))$\;\label{alg:enqueue_update_end}
            }
        }
    }
    \BlankLine
    \myproc{\scan{$\aQueue_n$}}{\label{alg:scan_begin}
        \ForEach{$(\aTrip\!,j,k,\atime)\in\aQueue_n$}{\label{alg:segment_pruning_begin}
            \For{$i$ from $j+1$ to $k$}{
                \lIf{$\transferTime(\aTrip,i)<\atime$}{$k\leftarrow{}i-1$}\label{alg:segment_pruning_smaller}
                \lElseIf{$\transferTime(\aTrip,i)=\atime$~\And~$\pred(\aTrip)\neq\bot$~\And~$\transferTime(\pred(\aTrip),i)\leq\atime$}{$k\leftarrow{}i-1$}\label{alg:segment_pruning_equal}
            }
        }
        \ForEach{$(\aTrip\!,j,k,\atime)\in\aQueue_n$}{
            \For{$i$ from $j$ to $k$}{
                $\journeys\leftarrow{}$Pareto set of~$\journeys\cup\aSet{(\arrivalTime(\aTrip[i])+\finalTransferTime(\aVertex(\aTrip[i]),\aTarget),n,\atime+\finalTransferTime(\aVertex(\aTrip[i]),\aTarget))}$\;\label{alg:update_journeys}
            }
        }
        \ForEach{$(\aTrip\!,j,k,\atime)\in\aQueue_n$}{\label{alg:transfers_begin}
            \If{$\journeys$~dominates~$(\arrivalTime(\aTrip[j]),n+1,\atime)$}{\Continue}\label{alg:transfers_pruning}
            \For{$i$ from $j$ to $k$}{
                
                \ForEach{\smash{$\edge=(\aTrip[i],\aTrip'[i'])\in\tripBasedEdges$}}{
                    $\enqueue(\aTrip',i'+1,\aQueue_{n+1},\atime+\transferTime(\edge))$\;\label{alg:transfers_enqueue}
                }
            }
        }
    }
    
\end{algorithm2e}

\section{Bounded Query Algorithms}
\label{sec:restricted-query-algorithms}
In this section, we introduce algorithms for computing restricted Pareto sets in a network with unlimited transfers.
Note that simply combining BM-RAPTOR with McULTRA shortcuts is not sufficient.
BM-RAPTOR requires that the forward and backward pruning searches find optimal arrivals or departures at each stop.
However, since ULTRA-RAPTOR explores final transfers with a backward Bucket-CH search from~\aTarget, optimal arrivals via a transfer at stops other than~\aTarget may not be found.
We show that only small modifications are necessary to make ULTRA-RAPTOR support pruning searches.
Furthermore, we introduce a new TB-based algorithm for computing restricted Pareto sets.

\subsection{ULTRA-BM-RAPTOR}
For the pruning searches, we replace ULTRA-RAPTOR with a slightly modified algorithm, which we call pRAPTOR.
Aside from using the three-criteria McULTRA shortcuts for the intermediate transfers, the only difference is the transfer relaxation phase.
Normally, ULTRA-RAPTOR relaxes the outgoing shortcut edges of all stops whose arrival time was improved during the preceding route scanning phase.
pRAPTOR additionally relaxes the edges of stops whose arrival time was improved during the previous transfer relaxation phase.
This allows pRAPTOR to find journeys that use multiple shortcut edges in a row.
However, each additional edge is counted as an additional trip.

Theorem~\ref{th:pruning-raptor} shows that pRAPTOR can be used to perform pruning searches.
Combined with ULTRA-McRAPTOR for the main search, this yields ULTRA-BM-RAPTOR as a multimodal algorithm for computing restricted Pareto sets.

\begin{theorem}\label{th:pruning-raptor}
    Let~$\arrivalTime(\astop,i)$ denote the arrival time at~\astop found by pRAPTOR in round~$i$.
    Consider a journey~\aJourney which is Pareto-optimal for three criteria.
    Given a stop~\astop visited by~\aJourney, let~$\arrivalTime(\aJourney,\astop)$ denote the arrival time of~\aJourney at~\astop.
    For~$1\leq{}i\leq{}|\aJourney|$, let~$\aTrip_i$ be the $i$-th trip of~\aJourney, $\stopa_i$ the stop where~$\aTrip_i$ is entered and~$\stopb_i$ the stop where~$\aTrip_i$ is exited.
    Then~$\arrivalTime(\stopa_i,i-1)\leq\arrivalTime(\aJourney,\stopa_i)$ and~$\arrivalTime(\stopb_i,i)\leq\arrivalTime(\aJourney,\stopb_i)$.
\end{theorem}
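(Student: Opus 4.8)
The plan is to prove both inequalities simultaneously by induction on~$i$ from~$1$ to~$|\aJourney|$, maintaining the invariant that after the transfer-relaxation phase of round~$i-1$ we have $\arrivalTime(\stopa_i,i-1)\le\arrivalTime(\aJourney,\stopa_i)$, and after the route-scanning phase of round~$i$ we have $\arrivalTime(\stopb_i,i)\le\arrivalTime(\aJourney,\stopb_i)$. The key simplification is that pRAPTOR is a two-criteria search that keeps only the earliest arrival time per stop and round and prunes by arrival time alone; transfer time is never used for pruning. Hence it suffices to exhibit, for each~$i$, a walk through the shortcut-augmented network that reaches~$\stopa_i$ with at most~$i-1$ trips and~$\stopb_i$ with at most~$i$ trips, each time with arrival time no later than that of~$\aJourney$; the algorithm's actual labels can only be better.

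For the base case, the first trip~$\aTrip_1$ is entered at~$\stopa_1$ via the initial transfer of~$\aJourney$, and the Bucket-CH search run at the start of the query computes the minimum arrival time at~$\stopa_1$ over all initial transfers from~$\aSource$, so $\arrivalTime(\stopa_1,0)\le\arrivalTime(\aJourney,\stopa_1)$. The ``route-scanning'' half of every step is then the same argument: given $\arrivalTime(\stopa_i,i-1)\le\arrivalTime(\aJourney,\stopa_i)\le$ the departure time of~$\aTrip_i$ at~$\stopa_i$, round~$i$ scans the route of~$\aTrip_i$ at~$\stopa_i$ and enters~$\aTrip_i$ or an earlier trip of that route, which by the no-overtaking property of routes arrives at~$\stopb_i$ no later than~$\aTrip_i$; this yields $\arrivalTime(\stopb_i,i)\le\arrivalTime(\aJourney,\stopb_i)$.

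The ``transfer'' half of the step is where McULTRA enters. Between~$\stopb_i$ and~$\stopa_{i+1}$ the journey~$\aJourney$ uses an intermediate transfer, and the two-trip sub-journey consisting of~$\aTrip_i$ up to~$\stopb_i$, this transfer, and~$\aTrip_{i+1}$ from~$\stopa_{i+1}$ is exactly what the McULTRA shortcut computation inspects when run from the source stop~$\stopa_i$. If this sub-journey is an undominated candidate, a shortcut~$\edge=(\stopb_i,\stopa_{i+1})$ is inserted; relaxing it from~$\stopb_i$, which round~$i$'s route scan has just improved, gives $\arrivalTime(\stopa_{i+1},i)\le\arrivalTime(\stopb_i,i)+\transferTime(\edge)\le\arrivalTime(\aJourney,\stopa_{i+1})$ (the shortcut is no slower than the transfer it represents), after which round~$i+1$ enters~$\aTrip_{i+1}$ at~$\stopa_{i+1}$ as above. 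Otherwise the sub-journey is dominated by some other journey of the same shortcut-computation iteration; since that iteration starts at~$\stopa_i$ without an initial transfer, this dominating journey has the shape of a short prefix out of~$\stopa_i$ — a direct trip from~$\stopa_i$, optionally followed by one further shortcut transfer — that reaches the stop event at which~$\aTrip_{i+1}$ is boarded at~$\stopa_{i+1}$ with no larger arrival time and at most two trips. Pareto-optimality of~$\aJourney$ rules out any dominating journey with strictly fewer trips (substituting it into~$\aJourney$ would strongly dominate~$\aJourney$), so the trip count is exactly two, and pRAPTOR can follow this prefix: the round-$i$ route scan out of~$\stopa_i$ provides the first trip, and pRAPTOR's extra relaxation of shortcut edges out of stops improved in the preceding transfer phase provides the optional second shortcut, both still within round~$i$.

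The main obstacle is the ``otherwise'' branch: I must go through the possible reasons a sub-journey around an intermediate transfer of~$\aJourney$ fails to be an undominated candidate (domination by a witness with at most two trips, or weak domination by another candidate, the latter possibly chaining through several candidates before reaching an undominated one), confirm in each case that Pareto-optimality of~$\aJourney$ forces the dominating journey to use exactly as many trips as the part it replaces and to arrive no later, and verify that reproducing it in pRAPTOR costs at most one extra shortcut relaxation — which is precisely why pRAPTOR also relaxes edges out of freshly improved stops — so that the round index never slips past~$i$. This amounts to re-running the ULTRA/McULTRA shortcut-correctness argument in the pruning-search setting, and making the trip-count bookkeeping close exactly (rather than merely up to a slack) is the delicate part.
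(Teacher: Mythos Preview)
Your induction skeleton (base case via Bucket-CH, route-scanning step via the no-overtaking property) matches the paper. The transfer step, however, is where you diverge---and where a real gap appears.

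First, the ``otherwise'' branch is unnecessary. Because $\aJourney$ is Pareto-optimal for the three criteria, the correctness guarantee of McULTRA already says that every intermediate transfer of $\aJourney$ is represented by a stop-to-stop shortcut. So the edge $(\stopb_i,\stopa_{i+1})$ is present in the shortcut graph; you do not need to re-run the candidate/witness analysis to argue its existence, and the paper's proof simply uses this fact.

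Second, and more importantly, you assume that ``round~$i$'s route scan has just improved'' $\arrivalTime(\stopb_i,i)$, so that the transfer phase of round~$i$ will relax $(\stopb_i,\stopa_{i+1})$. This assumption can fail. It is entirely possible that some earlier round~$j<i$ already attained $\arrivalTime(\stopb_i,j)\le\arrivalTime(\aJourney,\stopb_i)$; then the route scan in round~$i$ does not improve $\stopb_i$, and the round-$i$ transfer phase will not touch its outgoing edges. If that earlier improvement at $\stopb_i$ came from a \emph{transfer} (not a route scan) in round~$j$, then standard ULTRA-RAPTOR would never relax $(\stopb_i,\stopa_{i+1})$ at all. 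This is precisely the situation pRAPTOR's extra rule is designed for: since $\stopb_i$ was improved by the transfer phase of round~$j$, pRAPTOR relaxes its outgoing shortcuts in the transfer phase of round~$j+1\le i$, yielding $\arrivalTime(\stopa_{i+1},i)\le\arrivalTime(\stopa_{i+1},j+1)\le\arrivalTime(\aJourney,\stopa_{i+1})$. The paper's proof performs exactly this case split (does the round-$i$ route scan improve $\stopb_i$ or not?).

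You have misread the purpose of pRAPTOR's extra relaxation: it is not there to let you follow a replacement two-trip witness when the direct shortcut is absent; it is there to chain two shortcut edges across consecutive rounds when the intermediate stop was last improved by a transfer rather than a route scan. Your ``otherwise'' branch never addresses the non-improvement case, and the sentence ``both still within round~$i$'' cannot be made to work---a second trip needs another round, and a second shortcut edge (in pRAPTOR's accounting) also costs another round.
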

\begin{proof}
    First we show that~$\arrivalTime(\stopa_i,i-1)\leq\arrivalTime(\aJourney,\stopa_i)$ implies~$\arrivalTime(\stopb_i,i)\leq\arrivalTime(\aJourney,\stopb_i)$.
    If pRAPTOR arrives at~$\stopa_i$ no later than~\aJourney, it will scan~$\aTrip_i$ or an earlier trip of the same route during the route scanning phase of round~$i$, and thereby reach~$\stopb_i$ with an arrival time of~$\arrivalTime(\aJourney,\stopb_i)$ or better.
    For~$i<|\aJourney|$, we show that this in turn implies~$\arrivalTime(\stopa_{i+1},i)\leq\arrivalTime(\aJourney,\stopa_{i+1})$:
    If the arrival via the route of~$\aTrip_i$ in round~$i$ improves the previous value of~$\arrivalTime(\stopb_i,i)$, then the following transfer phase in round~$i$ will relax the shortcut~$(\stopb_i,\stopa_{i+1})$ and find a suitable arrival at~$\stopa_{i+1}$.
    Otherwise, $\arrivalTime(\stopb_i,j)\leq\arrivalTime(\aJourney,\stopb_i)$ must hold for some~$j<i$.
    Then the transfer phase of round~$j+1\leq{}i$ will relax~$(\stopb_i,\stopa_{i+1})$ and arrive at~$\stopa_{i+1}$ in time.
    Since the base case~$i=1$ follows from the correctness of Bucket-CH and RAPTOR, the claim is proven by induction.
\end{proof}

\subsection{ULTRA-BM-TB}
To achieve even faster query times, we introduce ULTRA-BM-TB, a TB-based algorithm for computing restricted Pareto sets.
ULTRA-BM-TB follows the same query framework as BM-RAPTOR, but uses a variant of ULTRA-TB for the pruning searches and ULTRA-McTB for the main search.
Switching to TB-based algorithms necessitates different data structures for pruning.
The main search of BM-RAPTOR relies on earliest arrival times~$\forwardArrivalTime(\cdot,i)$ and latest departure times~$\backwardDepartureTime(\cdot,i)$ for each round~$i$, which are computed by the pruning searches.
A natural adaptation of these data structures for TB is to replace them with a \emph{forward reached index}~$\forwardReachedIndex(\aTrip,i)$ and a \emph{backward reached index}~$\backwardReachedIndex(\aTrip,i)$ for each trip~$\aTrip$ and round~$i$.
The forward reached index~$\forwardReachedIndex(\aTrip,i)$ is the first stop index along~\aTrip that is reachable from~\aSource with~$i$ trips, while the backward reached index~$\backwardReachedIndex(\aTrip,i)$ is the last stop index along~\aTrip from which~\aTarget is reachable with~$i$ trips and without exceeding the trip or arrival slack.
These reached indices can be used for pruning in the~$\enqueue$ procedure:
When entering a trip~$\aTrip$ at stop event~$\aTrip[k]$ in round~$i$, a backward pruning search running for~$n$ rounds does not enqueue the trip if~$k<\forwardReachedIndex(\aTrip,n-i)$.
Likewise, the main search does not enqueue the trip if~$k>\backwardReachedIndex(\aTrip,K\cdot\tripSlack-i)$.

\subparagraph*{Computing Per-Round Reached Indices.}
The presented pruning scheme requires the pruning searches to output one reached index per stop and round, whereas the original TB query only maintains one reached index per stop.
For the forward pruning search, this can be changed by simply initializing~$\forwardReachedIndex(\astop,i)$ with~$\forwardReachedIndex(\astop,i-1)$ for each stop~\astop at the start of round~$i$.
The backward pruning searches require a different approach because they do not access the rounds of~$\backwardReachedIndex(\cdot,\cdot)$ in order.
Before the first backward search is started, the backward reached indices for all~$K\cdot\tripSlack$ rounds are initialized with~$\infty$.
Whenever a backward reached index~$\backwardReachedIndex(\aTrip,i)$ is set to a value~$k$, this value is propagated to the following rounds by setting~$\backwardReachedIndex(\aTrip,j)$ to~$\min(\backwardReachedIndex(\aTrip,j),k)$ for all~$i<j\leq{}K\cdot\tripSlack$.

\subparagraph*{Shortcut Augmentation.}
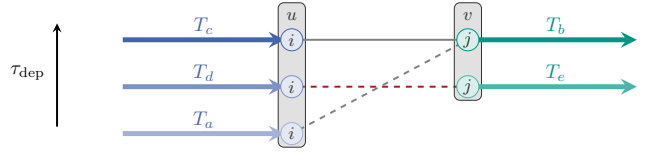
\begin{figure}[t]
    \centering
    \scalebox{0.75}{\begin{tikzpicture}[scale=1.04]
\begin{scope}[yscale=0.8]
    \node (sa) at ( 0.00,  0.00) {};%
    \node (sd) at ( 0.00,  1.00) {};%
    \node (sc) at ( 0.00,  2.00) {};%
    \node (ua) at ( 3.00,  0.00) {};%
    \node (ud) at ( 3.00,  1.00) {};%
    \node (uc) at ( 3.00,  2.00) {};%
    \node (u_label) at (3.00, 2.50) {};%
    \node (vb) at ( 6.00,  2.00) {};%
    \node (ve) at ( 6.00,  1.00) {};%
    \node (v_label) at (6.00, 2.50) {};%
    \node (tb) at ( 9.00,  2.00) {};%
    \node (te) at ( 9.00,  1.00) {};%
    \node (arrow_begin) at (-1.00, 0.00) {};
    \node (arrow_end) at (-1.00, 2.50) {};
    
    \node [fit=(ua)(ud)(uc)(u_label),line width=.5pt, draw=nodeColor!100,fill=nodeColor!15,rounded corners=0.1cm] {};%
    \node [fit=(vb)(ve)(v_label),line width=.5pt,draw=nodeColor!100,fill=nodeColor!15,rounded corners=0.1cm] {};%
    
    \draw [KITblue!50, line width=2.5pt, routeArrow] (sa) -- (ua);
    \draw [KITblue!70, line width=2.5pt, routeArrow] (sd) -- (ud);
    \draw [KITblue, line width=2.5pt, routeArrow] (sc) -- (uc);
    \draw [KITgreen, line width=2.5pt, routeArrow] (vb) -- (tb);
    \draw [KITgreen!70, line width=2.5pt, routeArrow] (ve) -- (te);
    
    \draw [black, line width=1pt, routeArrow] (arrow_begin) -- (arrow_end);
    \node [align=left] at ( -1.50, 1.30) {$\departureTime$};%

    \node [align=left,text=KITblue] at ( 1.50, 0.30) {$\aTrip_a$};%
    \node [align=left,text=KITblue] at ( 1.50, 1.30) {$\aTrip_d$};%
    \node [align=left,text=KITblue] at ( 1.50, 2.30) {$\aTrip_c$};%
    \node [align=left,text=KITgreen] at ( 7.50, 2.30) {$\aTrip_b$};%
    \node [align=left,text=KITgreen] at ( 7.50, 1.30) {$\aTrip_e$};%

    \draw [KITred, line width=1pt, dashed]  (ud) -- (ve);
    \draw [edgeColor, line width=1pt, dashed]  (ua) -- (vb);
    \draw [edgeColor, line width=1pt]  (uc) -- (vb);

    \node (ua_v) at (ua) [vertex,draw=KITblue!50,fill=KITblue!15] {\gs};%
    \node (ud_v) at (ud) [vertex,draw=KITblue!70,fill=KITblue!15] {\gs};%
    \node (uc_v) at (uc) [vertex,draw=KITblue!100,fill=KITblue!15] {\gs};%
    \node at (vb) [vertex,draw=KITgreen!100,fill=KITgreen!15] {\gs};%
    \node at (ve) [vertex,draw=KITgreen!70,fill=KITgreen!15] {\gs};%

    \node at (ua) [text=nodeColor!100] {\small{$i$}};%
    \node at (ud) [text=nodeColor!100] {\small{$i$}};%
    \node at (uc) [text=nodeColor!100] {\small{$i$}};%
    \node at (u_label) [text=nodeColor!100] {\small{$\stopa$}};%
    \node at (vb) [text=nodeColor!100] {\small{$j$}};%
    \node at (ve) [text=nodeColor!100] {\small{$j$}};%
    \node at (v_label) [text=nodeColor!100] {\small{$\stopb$}};%
\end{scope}
\end{tikzpicture}}%
    \caption{%
        Shortcut augmentation in an example network.
        Trips~$\aTrip_a$, $\aTrip_c$ and~$\aTrip_d$ belong to the blue route and visit stop~\stopa at index~$i$.
        Trips~$\aTrip_b$ and~$\aTrip_e$ belong to the green route, which visits stop~\stopb at index~$j$.
        Earlier trips are drawn below later trips of the same route.
        If the shortcut~$(\aTrip_c[i],\aTrip_b[j])$ exists, the shortcut~$(\aTrip_a[i],\aTrip_b[j])$ is added unless a shortcut~$(\aTrip_d[i],\aTrip_e[j])$ exists.
    }%
    \label{fig:shortcutAugmentation}%
\end{figure}

To establish a correct pruning scheme, the pruning searches must fulfill a condition analogous to that stated by Theorem~\ref{th:pruning-raptor}.
For this purpose, we introduce the set of \emph{augmented shortcuts}~\augmentedEdges for a given set~\tripBasedEdges of event-to-event McULTRA shortcuts:
\begin{align*}
\augmentedEdges = \big\{ (\aTripA[i],\aTripB[j]) \mid \exists \aTripC \succeq \aTripA&\colon (\aTripC[i],\aTripB[j]) \in \tripBasedEdges \mkern9mu \land \\\forall \aTripD \succeq \aTripA \,\, \forall \aTripE \prec \aTripB&\colon (\aTripD[i], \aTripE[j]) \notin \tripBasedEdges  \big\}
\end{align*}

Figure~\ref{fig:shortcutAugmentation} illustrates how shortcut augmentation works.
For each shortcut~$(\aTrip_c[i],\aTrip_b[j])\in\tripBasedEdges$ and each trip~$\aTrip_a$ preceding~$\aTrip_c$ in the same route, it ensures that a shortcut exists from~$\aTrip_a[i]$ to either~$\aTrip_b[j]$ or an earlier trip of the same route at~$j$.
Clearly, this can be achieved by adding the shortcut~$(\aTrip_a[i],\aTrip_b[j])$.
However, if a shortcut~$(\aTrip_d[i],\aTrip_e[j])$ exists for~$\aTrip_d\succeq\aTrip_a$ and~$\aTrip_e\prec\aTrip_b$, the shortcut~$(\aTrip_a[i],\aTrip_b[j])$ is superfluous because there must already be a shortcut from~$\aTrip_a[i]$ to~$\aTrip_e[j]$ or an even earlier trip at~$j$.

Lemma~\ref{th:augmentation} and Theorem~\ref{th:pruning-tb} prove that using an ULTRA-TB query with per-round reached indices and augmented McULTRA shortcuts for the pruning searches yields a correct pruning scheme.

\begin{lemma}\label{th:augmentation}
    For any~$(\aTripC[i],\aTripB[j])\in\tripBasedEdges$ and all~$\aTripA\preceq\aTripC$, there exists a~$\aTripF\preceq\aTripB$ with~$(\aTripA[i],\aTripF[j])\in\augmentedEdges$.
\end{lemma}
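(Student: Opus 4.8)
The plan is to exhibit an explicit choice of $\aTripF$ and then verify it meets the defining conditions of $\augmentedEdges$. First I would fix $(\aTripC[i],\aTripB[j])\in\tripBasedEdges$ together with a trip $\aTripA\preceq\aTripC$ of the same route as $\aTripC$, and consider the set
\[
S := \{\, \aTripE \text{ trip of the route of } \aTripB \mid \exists\,\aTripD\succeq\aTripA\colon (\aTripD[i],\aTripE[j])\in\tripBasedEdges \,\}.
\]
This set is nonempty, since taking $\aTripD := \aTripC$ (which satisfies $\aTripC\succeq\aTripA$ by assumption) witnesses $\aTripB\in S$. Because a route has only finitely many trips and, by the no-overtaking property, any two trips of a route are comparable under $\preceq$, the finite set $S$ has a $\preceq$-minimal element; I would let $\aTripF$ be such an element. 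Minimality together with $\aTripB\in S$ immediately yields $\aTripF\preceq\aTripB$, which is the first thing the lemma asks for.

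Next I would check $(\aTripA[i],\aTripF[j])\in\augmentedEdges$ against the two clauses of its definition. The first clause demands some $\aTripC'\succeq\aTripA$ with $(\aTripC'[i],\aTripF[j])\in\tripBasedEdges$; this is exactly the statement $\aTripF\in S$. For the second clause I would argue by contradiction: if there were $\aTripD\succeq\aTripA$ and $\aTripE\prec\aTripF$ with $(\aTripD[i],\aTripE[j])\in\tripBasedEdges$, then $\aTripE\in S$, and $\aTripE\prec\aTripF$ would contradict the $\preceq$-minimality of $\aTripF$ in $S$. Hence no such pair exists and both clauses hold, so $(\aTripA[i],\aTripF[j])\in\augmentedEdges$, completing the proof.

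Since the argument is short, the only things requiring care are bookkeeping points rather than a genuine obstacle: that $\preceq$ restricted to a single route is a total preorder, so that a minimal element of $S$ exists and is automatically $\preceq\aTripB$; that ties under $\preceq$ are harmless, because a trip tied with $\aTripF$ is not $\prec$-strictly smaller and therefore does not threaten the second clause; and that all stop-event indices are consistent — $\aTripD[i]$ is well defined because every $\aTripD\succeq\aTripA$ lies on the route of $\aTripA$ and thus visits $\astop(\aTripA[i])$ at index $i$, and likewise every trip of the route of $\aTripB$ visits $\astop(\aTripB[j])$ at index $j$. I expect to spend at most a sentence on each of these and encounter no real difficulty.
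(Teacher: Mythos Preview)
Your proof is correct and is essentially the same argument as the paper's: both pick the $\preceq$-minimal trip $\aTripF$ among those trips on the route of $\aTripB$ that are hit by some shortcut from a trip $\succeq\aTripA$ at index~$i$, and then observe that minimality is exactly the second clause in the definition of $\augmentedEdges$. The only cosmetic difference is that you argue directly by constructing $S$ and taking its minimum, whereas the paper wraps the same step in a proof by contradiction (assuming no suitable $\aTripF$ exists, noting that then the second clause for $\aTripB$ must fail, and choosing the earliest such $\aTripE$); your packaging is arguably cleaner.
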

\begin{proof}
    Consider a trip~$\aTripA\preceq\aTripC$.
    Assume that for every~$\aTripF\preceq\aTripB$, there is no shortcut~$(\aTripA[i],\aTripF[j])$ in~$\augmentedEdges$.
    In particular, this means~$(\aTripA[i],\aTripB[j])\notin\augmentedEdges$.
    Since~$\aTripC\succeq\aTripA$ and~$(\aTripC[i],\aTripB[j])\in\tripBasedEdges$, it follows by definition of~\augmentedEdges that there must be a~$\aTripD\succeq\aTripA$ and~$\aTripE\prec\aTripB$ such that~$(\aTripD[i], \aTripE[j]) \in \tripBasedEdges$.
    Among all such~\aTripE, consider the one with the earliest departure time.
    Then it follows by definition of~\augmentedEdges that~$(\aTripA[i], \aTripE[j])\in\augmentedEdges$ since there is no edge~$(\aTripG[i],\aTripH[j])\in\tripBasedEdges$ with~$\aTripG\succeq\aTripA$ and~$\aTripH\prec\aTripE$.
    This contradicts our assumption.
\end{proof}

\begin{theorem}\label{th:pruning-tb}
    Consider a journey~\aJourney which is Pareto-optimal for three criteria.
    For~$1\leq{}i\leq|\aJourney|$, let~$\aTrip_i$ be the~$i$-th trip of~\aJourney and~$\aTrip_i[k_i]$ the stop event where~\aJourney exits~$\aTrip_i$.
    Then an ULTRA-TB query using augmented McULTRA shortcuts has a reached index~$\reachedIndex(\aTrip_i) \leq k_i$ after round~$i$.
\end{theorem}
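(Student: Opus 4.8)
The plan is to prove the claim by induction on $i$, closely following the proof of Theorem~\ref{th:pruning-raptor}. Write $m_i$ for the index at which $\aJourney$ boards $\aTrip_i$, so that $m_i<k_i$ and, for $1\le i<|\aJourney|$, the $i$-th intermediate transfer of $\aJourney$ runs from $\aTrip_i[k_i]$ to $\aTrip_{i+1}[m_{i+1}]$. The invariant I maintain is that after round $i$ we have $\reachedIndex(\aTrip_i)\le k_i$; since TB propagates the reached index of every enqueued trip to all later trips of its route, this is equivalent to some stop event $\aTrip^*[k_i]$ with $\aTrip^*\preceq\aTrip_i$ having been scanned by round $i$, so that all outgoing shortcuts of $\aTrip^*[k_i]$ have been relaxed. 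The base case $i=1$ follows from the initial-transfer evaluation: $\aJourney$ reaches the stop $\aVertex(\aTrip_1[m_1])$ via an initial transfer no later than the departure time of $\aTrip_1$ at index $m_1$, so by correctness of the Bucket-CH search the query does too; it therefore boards, at index $m_1$, the earliest trip of $\aTrip_1$'s route reachable there, which is $\preceq\aTrip_1$, and enqueues it at index $m_1+1\le k_1$, giving $\reachedIndex(\aTrip_1)\le k_1$.

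For the inductive step ($1\le i<|\aJourney|$), assume $\reachedIndex(\aTrip_i)\le k_i$ after round $i$ and fix $\aTrip^*\preceq\aTrip_i$ as above. The key claim is that $\tripBasedEdges$ contains a shortcut with source $\aTrip_c[k_i]$ for some $\aTrip_c\succeq\aTrip_i$ and destination $\aTrip_d[m_{i+1}]$ for some $\aTrip_d\preceq\aTrip_{i+1}$. Granting the claim, I apply Lemma~\ref{th:augmentation} with $\aTrip^*\preceq\aTrip_i\preceq\aTrip_c$ to obtain a trip $\aTrip_f\preceq\aTrip_d\preceq\aTrip_{i+1}$ with $(\aTrip^*[k_i],\aTrip_f[m_{i+1}])\in\augmentedEdges$. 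In the round (at most $i$) that scans $\aTrip^*[k_i]$, relaxing this augmented shortcut enqueues $\aTrip_f$ at index $m_{i+1}+1$; since $m_{i+1}<k_{i+1}$ and $\aTrip_f\preceq\aTrip_{i+1}$, this forces $\reachedIndex(\aTrip_{i+1})\le m_{i+1}+1\le k_{i+1}$ (either because the value is improved, or because the pruning in $\enqueue$ leaves it unchanged only when it is already that small), which is the invariant after round $i+1$. For the key claim itself: when the McULTRA shortcut computation processes the boarding stop and time of $\aTrip_i$ in $\aJourney$, it generates a candidate that boards $\aTrip_i$, performs $\aJourney$'s $i$-th transfer, and then boards the earliest trip reachable at $\aVertex(\aTrip_{i+1}[m_{i+1}])$ --- which is $\preceq\aTrip_{i+1}$ because $\aJourney$ witnesses an arrival there that is at least as early --- and, since $\aJourney$ is Pareto-optimal, this candidate is retained (so the shortcut is added with $\aTrip_c=\aTrip_i$), or it is weakly dominated by a candidate rooted at an equivalent first trip, for which the same reasoning applies to an equivalent Pareto-optimal journey.

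I expect the key claim to be the main obstacle. Making it rigorous requires unwinding the candidate/witness bookkeeping of the shortcut computation, in particular the event-to-event pruning rule under which a candidate survives unless strongly dominated by a witness or weakly dominated by another candidate, and ruling out the bad cases: a witness strongly dominating the relevant candidate would, after being spliced in front of the tail of $\aJourney$, yield a journey contradicting the Pareto-optimality of $\aJourney$ (counting trips carefully and invoking Lemma~\ref{lemma:stop-event}), while a candidate weakly dominating it yields an equivalent Pareto-optimal journey to which the induction applies, together with a finiteness argument so that this does not regress indefinitely. A secondary, more mechanical obstacle is the reached-index accounting used in the invariant: that $\reachedIndex(\aTrip_i)\le k_i$ after round $i$ really does force some stop event $\aTrip^*[k_i]$ with $\aTrip^*\preceq\aTrip_i$ to be scanned by round $i$ despite TB deduplicating overlapping trip segments, and that enqueuing $\aTrip_f$ at index $m_{i+1}+1$ makes the invariant hold for $\aTrip_{i+1}$ after round $i+1$ even though $\aTrip_{i+1}$ itself may never be enqueued directly.
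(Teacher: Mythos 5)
Your proof is correct and follows essentially the same route as the paper: induction over the trips of \aJourney, base case from the correctness of Bucket-CH, and an inductive step that lifts a shortcut of \tripBasedEdges to an augmented shortcut in \augmentedEdges via Lemma~\ref{th:augmentation} and concludes $\reachedIndex(\aTrip_{i+1})\leq m_{i+1}+1\leq k_{i+1}$. The ``key claim'' you flag as the main obstacle is precisely the defining correctness guarantee of the McULTRA shortcut computation (the intermediate transfers of three-criteria Pareto-optimal journeys are represented in \tripBasedEdges), which the paper simply invokes as known rather than re-deriving from the candidate/witness bookkeeping, so no such unwinding is needed inside this proof.
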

\begin{proof}
    By induction over~$i$.
    The base case~$i=1$ follows from the correctness of Bucket-CH and TB.
    Assume the claim is true for~$i-1$.
    Since~$\reachedIndex(\aTrip_{i-1})\leq{}k_{i-1}$ after round~$i-1$, there must be a trip~$\aTrip'_{i-1}\preceq\aTrip_{i-1}$ that was exited at~$\aTrip'_{i-1}[k_{i-1}]$ in a round~$j<i$.
    We know that~$(\aTrip_{i-1}[k_{i-1}],\aTrip_i[\ell_i])\in\tripBasedEdges$ for the stop event~$\aTrip_i[\ell_i]$ where~\aJourney enters~$\aTrip_i$.
    Then~$(\aTrip'_{i-1}[k_{i-1}],\aTrip'_i[\ell_i])\in\augmentedEdges$ for some~$\aTrip'_i\preceq\aTrip_i$ by Lemma~\ref{th:augmentation}.
    Thus,~$\reachedIndex(\aTrip_i)\leq\reachedIndex(\aTrip'_i)\leq\ell_i+1\leq{}k_i$ after round~$i$.
\end{proof}

\subparagraph*{Optimizations.}
As described thus far, the pruning scheme causes unnecessary work during the backward searches:
The initial transfer phase of a backward search collects all routes from which~\aTarget is reachable and scans them to find trip segments to enqueue.
This is efficient for a normal ULTRA-TB query because~\aTarget is typically reachable from most stops.
In a bounded query, however, most stops are not reachable from~\aSource in time to produce an optimal journey when transferring from there to~\aTarget.
For such stops, the~$\enqueue$ operation will be called, but no trip segments will be enqueued because they will be pruned by the forward reached indices.
To avoid unnecessary~$\enqueue$ calls at these stops, we replace the forward reached indices~$\forwardReachedIndex(\cdot,\cdot)$ with earliest arrival times at stops.
As in the RAPTOR-based algorithm, the forward pruning search now computes an earliest arrival time~$\forwardArrivalTime(\astop,i)$ per stop~\astop and round~$i$.
A corollary of Theorem~\ref{th:pruning-tb} is that~$\forwardArrivalTime(\astop,i)$ is never worse than the arrival time of a Pareto-optimal journey at~\astop in round~$i$ via a trip, allowing the backward search to use it for pruning.
When entering a trip~$\aTrip$ at stop event~$\aTrip[k]$ in round~$i$, a backward search running for~$n$ rounds does not enqueue the trip segment if~$\forwardArrivalTime(\astop(\aTrip[k]),n-i)>\departureTime(\aTrip[k])$.
To explore initial transfers, a backward search with~$n$ rounds for a journey~\aJourney first collects all stops~\astop from which~\aTarget is reachable and for which~$\forwardArrivalTime(n)+\transferTime(\astop,\aTarget)-\departureTime\leq(\arrivalTime(\aJourney)-\departureTime)\cdot\arrivalSlack$ holds.
For each such stop~\astop and each route visiting~\astop, the latest reachable trip is found via binary search and the~$\enqueue$ operation is called.

To compute~$\forwardArrivalTime(\cdot,\cdot)$, the forward pruning search adds an extra operation during the trip scanning phase.
When relaxing the outgoing transfers of a stop event~$\aTrip[\ell]$ in round~$i$,~$\forwardArrivalTime(\astop(\aTrip[\ell]),i)$ is now set to the minimum of itself and~$\arrivalTime(\aTrip[\ell])$.
When starting a new round~$i$, the arrival time~$\forwardArrivalTime(\astop,i)$ for each stop~\astop is initialized with~$\forwardArrivalTime(\astop,i-1)$.

A final optimization concerns the main search.
Normally, the transfer times~$\transferTime(\cdot,\cdot)$ maintained by McTB are cleared at the start of each query.
In the context of ULTRA-BM-TB, where most of the search space is pruned, this is often more expensive than the main search itself.
Hence, we mark each transfer time~$\transferTime(\astop,i)$ with a timestamp.
When~$\transferTime(\astop,i)$ is accessed and its timestamp does not match that of the current query, the value is reset to~$\infty$.

\section{Experiments}
\label{sec:experiments}
\begin{table}[t]%
    \caption{%
        Sizes of the public transit networks, including the unrestricted transfer graphs as well as the transitively closed transfer graphs used for comparison with McRAPTOR and BM-RAPTOR.
    }%
    \label{tbl:networks}%
    \begin{tabular}{@{\,}l@{\,}r@{\,}r@{\,}r@{\,\,}}
        \toprule
        & \hspace{20mm}\llap{London} & \hspace{21mm}\llap{Switzerland} & \hspace{20mm}\llap{Germany} \\
        \midrule
        Stops                                 &                    19\,682 &                         25\,125 &                    243\,167 \\
        Routes                                &                     1\,955 &                         13\,786 &                    230\,225 \\
        Trips                                 &                   114\,508 &                        350\,006 &                 2\,381\,394 \\
        Stop events                           &                4\,508\,644 &                     4\,686\,865 &                48\,380\,936 \\[3pt]
        Vertices \hspace{-3mm} &                   181\,642 &                        603\,691 &                 6\,870\,496 \\
        Edges    \hspace{-3mm} &                   575\,364 &                     1\,853\,260 &                21\,367\,044 \\
        Trans. edges                &                3\,212\,206 &                     2\,639\,402 &                22\,571\,280 \\
        \bottomrule
    \end{tabular}
\end{table}

All algorithms were implemented in C\raisebox{0.35ex}{\scriptsize++}17 compiled with GCC 9.3.1 and optimization flag \mbox{-O3}.
Shortcut computations were run on a machine with two 64-core AMD Epyc Rome 7742 CPUs clocked at~2.25\,GHz, with a turbo frequency of~3.4\,GHz,~1024\,GiB of~\mbox{DDR4-3200}~RAM, and~256\,MiB of L3 cache.
All other experiments were conducted on a machine with two 8-core Intel Xeon Skylake SP Gold 6144 CPUs clocked at~3.5\,GHz, with a turbo frequency of~4.2\,GHz,~192\,GiB of~\mbox{DDR4-2666}~RAM, and~24.75\,MiB of L3 cache.
Source code for our algorithms is publicly available\footnote{\url{https://github.com/kit-algo/ULTRA-Transfer-Time}}.
\begin{figure*}[h!]
    \input{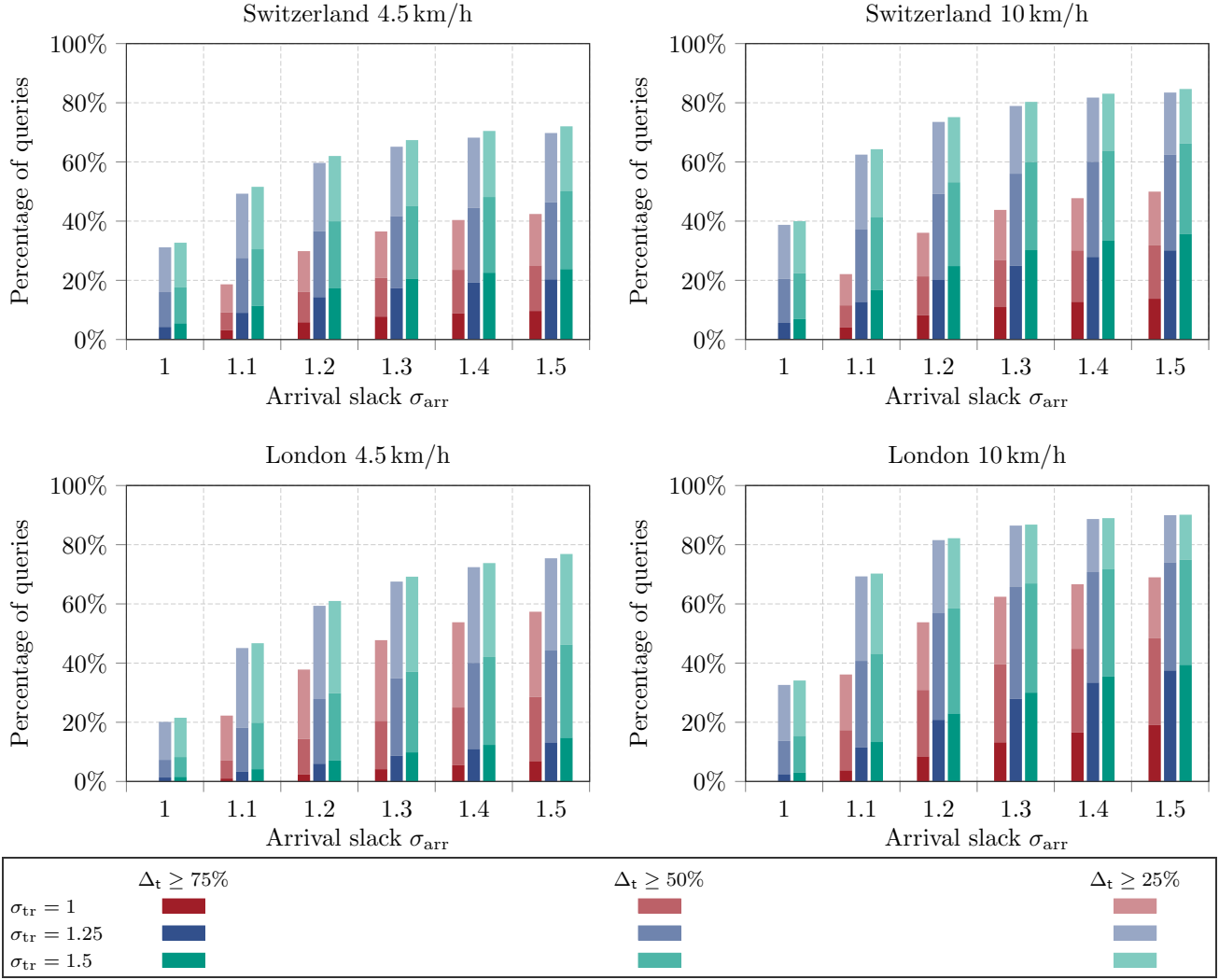}%
    \caption{Comparison of optimal transfer times in the two-criteria Pareto set~$\anchorJourneys$ versus a restricted three-criteria Pareto set~$\restrictedJourneys$ on the Switzerland and London networks for different transfer speeds.
        10\,000 random queries were run for each choice of slack values.
        We then calculated the transfer time savings as~$\transferTimeDiff=(\tau_A-\tau_R)/\tau_A$, where~$\tau_R$ is the lowest transfer time in~$\restrictedJourneys$ and~$\tau_A$ the lowest transfer time in~$\anchorJourneys$.
        Shown is the percentage of queries where~$\transferTimeDiff$ exceeds the specified threshold.}
    \label{fig:transferTimeSavings}
\end{figure*}

Table~\ref{tbl:networks} gives an overview of the networks we use for evaluation.
All three networks were previously used in~\cite{Sau20}.
The London and Switzerland networks were sourced from Transport for London\footnote{\url{https://data.london.gov.uk}} and a publicly available GTFS feed\footnote{\url{https://gtfs.geops.ch/}}, respectively.
The Germany network was provided by Deutsche Bahn.
Unlimited transfer graphs were extracted from OpenStreetMap\footnote{\url{https://download.geofabrik.de/}}.
Unless otherwise noted, travel times were computed by assuming a constant speed of~4.5\,km/h, representing walking.
For comparison with McRAPTOR and BM-RAPTOR, transitively closed transfer graphs were generated using the methods reported in~\cite{Wag17}.
To aid reproducibility, we make the London and Switzerland networks publicly available\footnote{\url{https://i11www.iti.kit.edu/PublicTransitData/ULTRA/}}.
Unfortunately, we cannot provide the Germany network as it is proprietary.

\subparagraph{Impact of Optimizing Transfer Time.}
\begin{table*}[t]
    \center
    \caption{Shortcut computation results for ULTRA and McULTRA.
        Times are formatted as hh:mm:ss.
        For event-to-event McULTRA,~$|\protect\overrightarrow{\augmentedEdges}|$ and~$|\protect\overleftarrow{\augmentedEdges}|$ are the number of augmented forward and backward shortcuts, respectively.}%
    \label{tbl:preprocessing}%
    \begin{tabular*}{\textwidth}{@{\,}l@{\extracolsep{\fill}}r@{\extracolsep{\fill}}r@{\extracolsep{\fill}}r@{\extracolsep{\fill}}r@{\extracolsep{\fill}}r@{\extracolsep{\fill}}r@{\extracolsep{\fill}}r@{\extracolsep{\fill}}r@{\extracolsep{\fill}}r@{\extracolsep{\fill}}r@{\,}}
        \toprule
        \multirow{2}{*}{Network}     & \multirow{2}{*}{Variant} & \multicolumn{2}{c}{ULTRA} & \multicolumn{4}{c}{McULTRA} \\
        \cmidrule(){3-4}  \cmidrule(){5-8}
        &                          & Time      & \#\,Shortcuts & Time & \#\,Shortcuts & $|\overrightarrow{\augmentedEdges}|$ & $|\overleftarrow{\augmentedEdges}|$ \\
        \midrule
        \multirow{2}{*}{London}      & Stop   & 00:03:18 &      150\,738 & 00:14:11 &      136\,755 & -- & -- \\
        & Event & 00:03:57 &  17\,384\,810 & 00:28:35 &  18\,841\,658 & 51\,054\,754 & 58\,136\,150 \\[3pt]
        \multirow{2}{*}{Switzerland} & Stop   & 00:01:51 &      135\,694 & 00:03:48 &      214\,826 & -- & -- \\
        & Event & 00:02:09 &  11\,662\,444 & 00:08:08 &  13\,761\,297 & 44\,417\,661 & 44\,643\,118 \\[3pt]
        \multirow{2}{*}{Germany}     & Stop   & 02:42:59 &   2\,069\,438 &  04:19:50 &   2\,924\,040 & -- & -- \\
        & Event & 02:42:55 & 122\,952\,170 & 10:23:36 & 186\,748\,113 & 557\,609\,345 & 571\,099\,035 \\                           
        \bottomrule
    \end{tabular*}
\end{table*}
\begin{table*}[t]
    \center
    \caption{Performance of full Pareto set algorithms, averaged over~10\,000 random queries.
        RAPTOR query times are divided into phases: initialization, including clearing vertex bags and exploring initial/final transfers (Init.), collecting (Collect) and scanning (Scan) routes, and relaxing intermediate transfers (Relax).
        Also reported are the number of rounds (Rnd.) and the number of found journeys (Jrn.).
        McRAPTOR$^\ast$ only supports stop-to-stop queries with transitive transfers.}%
    \label{tbl:queries-full}%
    \begin{tabular*}{\textwidth}{@{\,}l@{\extracolsep{\fill}}l@{\hspace{-5pt}}c@{\extracolsep{\fill}}r@{\extracolsep{\fill}}r@{\extracolsep{\fill}}r@{\extracolsep{\fill}}r@{\extracolsep{\fill}}r@{\extracolsep{\fill}}r@{\extracolsep{\fill}}r@{\,}}
        \toprule
        \multirow{2}{*}{Network} & \multirow{2}{*}{Algorithm} & \multirow{2}{*}{\shortstack[c]{\vspace{0.04cm}\\Full \vspace{0.15cm} \\ graph \vspace{-0.16cm}}} & \multirow{2}{*}{Rnd.} & \multirow{2}{*}{Jrn.} & \multicolumn{5}{c}{Time [ms]} \\
        \cmidrule(){6-10}
        &                &           & & & Init.  & Collect  & Scan   & Relax  & Total\\
        \midrule
        \multirow{4}{*}{London}              
        & McRAPTOR$^\ast$ & $\circ$    & 12.6 & 21.7 & 16.1 & 2.5 & 92.8  & 157.3 & 268.6\\
        & MCR             & $\bullet$  & 14.6 & 30.5 & 37.5 & 4.5 & 147.6 & 269.5 & 459.2\\
        & ULTRA-McRAPTOR  & $\bullet$  & 14.6 & 30.5 & 27.9 & 3.9 & 144.7 & 53.7  & 230.2\\
        & ULTRA-McTB      & $\bullet$  & 14.8 & 30.5 & --   & --  & --    & --    & 79.2\\[3pt]                                  
        \multirow{4}{*}{Switzerland}              
        & McRAPTOR$^\ast$ & $\circ$    & 24.3 & 20.7 & 20.7 & 8.0  & 120.7 & 132.4  & 281.7\\
        & MCR             & $\bullet$  & 32.7 & 30.5 & 82.8 & 20.3 & 278.7 & 409.9  & 791.7\\
        & ULTRA-McRAPTOR  & $\bullet$  & 32.7 & 30.5 & 56.0 & 18.3 & 272.9 & 66.8   & 414.0\\
        & ULTRA-McTB      & $\bullet$  & 33.5 & 30.5 & --   & --   & --    & --     & 130.2\\[3pt]
        \multirow{4}{*}{Germany} 
        & McRAPTOR$^\ast$  & $\circ$    & 31.0 & 38.3 & 423.8 & 335.4 & 4\,399.7 & 3\,406.9 & 8\,565.8 \\
        & MCR              & $\bullet$  & 36.4 & 57.1 & 929.0 & 454.0 & 7\,192.3 & 15\,554.7 & 24\,131.1 \\
        & ULTRA-McRAPTOR   & $\bullet$  & 36.4 & 57.1 & 716.9 & 430.0 & 7\,366.4 & 1\,647.9 & 10\,160.6 \\
        & ULTRA-McTB       & $\bullet$  & 36.1 & 57.1 & -- & -- & -- & -- & 3\,967.3 \\
        
        \bottomrule
    \end{tabular*}
\end{table*}
To evaluate the impact of optimizing transfer time on the solution quality, we compared minimal transfer times in the two-criteria Pareto set to those in restricted three-criteria Pareto sets.
We chose restricted Pareto sets instead of a full one in order to exclude undesirable journeys with a low transfer time but excessive costs in the other criteria.
Figure~\ref{fig:transferTimeSavings} shows the results on Switzerland and London for different transfer speeds.
For walking as the transfer mode, more than 25\% of the transfer time can be saved for most queries, even with small slack values.
In up to 50\% of all queries, more than half of the transfer time can be saved.
This confirms that adding transfer time as a third criterion often significantly improves the quality of the found journeys.
The savings become even more pronounced as the transfer speed increases.
For 10\,km/h (e.g., slow cycling), the vast majority of queries allow at least a moderate improvement in transfer time.
Often, improvements are possible even without allowing a trip slack.
This underscores that optimizing the transfer time becomes especially crucial as the transfer speed increases.
Fast transfer modes are frequently competitive with public transit in terms of travel time, but not necessarily in terms of comfort.
If the transfer time is not considered, optimal journeys will often avoid trip legs altogether in favor of long transfers.

For walking as the transfer mode, both networks behave similarly.
London has a slightly larger share of queries with savings above 25\%, whereas very large savings are more frequent on the Switzerland network.
The impact of the transfer speed is more drastic for London, where large savings are more common for a transfer speed of 10\,km/h than on the Switzerland network.
This can be explained by different network topologies:
The London network consists of one large metropolitan area.
Here, public transit is common but often features many intermediate stops, so the average travel speed is fairly low across long distances.
If the transfer mode is fast enough, it becomes competitive in terms of travel time.
By contrast, long-distance journeys in Switzerland often involve high-speed trains, where even fast transfer modes are not competitive.

\subparagraph{Shortcut Computation.}
We now evaluate the McULTRA shortcut computation.
Following~\cite{Bau19b}, the transfer graphs were contracted up to an average vertex degree of~20 for Germany and~14 for the other networks.
As mentioned in Section~\ref{sec:shortcut-computation}, the first route scanning phase of each McRAPTOR iteration may use simplified RAPTOR-like route scans, but this may lead to superfluous shortcuts.
On the Switzerland network with both witness limits set to~$\infty$, using McRAPTOR-like route scans yields 4\% fewer shortcuts for the stop-to-stop variant and 1\% for the event-to-event variant.
However, this increases the shortcut computation time by 16\% and 10\%, respectively.
Hence, RAPTOR-like route scans are used for all subsequent experiments.
Similarly, the intermediate witness limit is always set to~$\intermediateWitnessLimit=0$ since this reduces the computation time by~60\% while only producing 2\% more shortcuts.

\begin{figure*}
    \newcommand{\plotHeight}{5.8cm}
\newcommand{\plotWidth}{0.49\textwidth}

\begin{tikzpicture}
\arrayrulecolor{legendColor}
\pgfplotsset{
   grid style={KITblack20,line width = 0.2pt,dash pattern = on 2pt off 1pt}
}

\colorlet{plotColor1}{KITgreen}
\colorlet{plotColor2}{KITpalegreen}
\colorlet{plotColor3}{KITcyanblue}
\colorlet{plotColor4}{KITblue}
\colorlet{plotColor5}{KITred}
\colorlet{plotColor6}{KITorange}

\begin{axis}[
   name=mainPlot,
   height=\plotHeight,
   width=\plotWidth,
   xmin=0,
   xmax=6,
   ymin=420,
   ymax=660,
   xlabel={Key ratio ($\arrivalWeight:\transferWeight$)},
   ylabel={Preprocessing time [min]},
   ylabel style = {yshift=-1pt},
   xtick={0, 1, 2, 3, 4, 5, 6},
   xticklabels={$1$:$0$, $16$:$1$, $4$:$1$, $1$:$1$, $1$:$4$, $1$:$16$, $0$:$1$},
   ytick={0, 420, 480, 540, 600, 660},
   yticklabel=\pgfmathparse{(\tick / 60)}${\pgfmathprintnumber{\pgfmathresult}}$\!,
   xtick pos=left,
   ytick pos=left,
   ytick align=outside,
   xtick align=outside,
   grid=both,
   axis line style={legendColor},
   at={(0.0\linewidth,0)}
]

\addplot[color=plotColor1,line width=1.5pt,mark=*] table[x expr=\coordindex,y=Time0,col sep=tab]{fig/keyWitnessLimit.dat};
\addplot[color=plotColor2,line width=1.5pt,mark=*] table[x expr=\coordindex,y=Time1800,col sep=tab]{fig/keyWitnessLimit.dat};
\addplot[color=plotColor3,line width=1.5pt,mark=*] table[x expr=\coordindex,y=Time3600,col sep=tab]{fig/keyWitnessLimit.dat};
\addplot[color=plotColor4,line width=1.5pt,mark=*] table[x expr=\coordindex,y=Time7200,col sep=tab]{fig/keyWitnessLimit.dat};
\addplot[color=plotColor5,line width=1.5pt,mark=*] table[x expr=\coordindex,y=Time14400,col sep=tab]{fig/keyWitnessLimit.dat};
\addplot[color=plotColor6,line width=1.5pt,mark=*] table[x expr=\coordindex,y=TimeInfty,col sep=tab]{fig/keyWitnessLimit.dat};

\end{axis}

\begin{axis}[
   scaled y ticks = false,
   y tick label style={/pgf/number format/fixed},
   height=\plotHeight,
   width=\plotWidth,
   xmin=0,
   xmax=6,
   ymin=13,
   ymax=15,
   xlabel={Key ratio ($\arrivalWeight:\transferWeight$)},
   ylabel={Shortcuts [M]},
   ylabel style = {yshift=-3pt},
   xtick={0, 1, 2, 3, 4, 5, 6},
   xticklabels={$1$:$0$, $16$:$1$, $4$:$1$, $1$:$1$, $1$:$4$, $1$:$16$, $0$:$1$},
   ytick={13, 13.5, 14, 14.5, 15},
   yticklabel=\pgfmathparse{\tick}${\pgfmathprintnumber{\pgfmathresult}}$\!,
   xtick pos=left,
   ytick pos=left,
   ytick align=outside,
   xtick align=outside,
   grid=both,
   axis line style={legendColor},
   at={(0.511\linewidth,0)}
]

\addplot[color=plotColor1,line width=1.5pt,mark=*] table[x expr=\coordindex,y=Shortcuts0,col sep=tab]{fig/keyWitnessLimit.dat};\label{legend:wl:0}
\addplot[color=plotColor2,line width=1.5pt,mark=*] table[x expr=\coordindex,y=Shortcuts1800,col sep=tab]{fig/keyWitnessLimit.dat};\label{legend:wl:30}
\addplot[color=plotColor3,line width=1.5pt,mark=*] table[x expr=\coordindex,y=Shortcuts3600,col sep=tab]{fig/keyWitnessLimit.dat};\label{legend:wl:60}
\addplot[color=plotColor4,line width=1.5pt,mark=*] table[x expr=\coordindex,y=Shortcuts7200,col sep=tab]{fig/keyWitnessLimit.dat};\label{legend:wl:120}
\addplot[color=plotColor5,line width=1.5pt,mark=*] table[x expr=\coordindex,y=Shortcuts14400,col sep=tab]{fig/keyWitnessLimit.dat};\label{legend:wl:240}
\addplot[color=plotColor6,line width=1.5pt,mark=*] table[x expr=\coordindex,y=ShortcutsInfty,col sep=tab]{fig/keyWitnessLimit.dat};\label{legend:wl:max}

\end{axis}

\node[inner sep=0pt,outer sep=0pt,yshift=-3pt] (legend) at (mainPlot.outer south west) {};

\node[inner sep=0pt,outer sep=0pt,anchor=north west] at (legend) {
\footnotesize
\begin{tabular*}{\textwidth}{|@{~}l@{~~~~}r@{\extracolsep{\fill}}r@{\extracolsep{\fill}}r@{\extracolsep{\fill}}r@{\extracolsep{\fill}}r@{\extracolsep{\fill}}r@{~}|}
    \hline
                                         &                               &                                 &                                 &                                   & &  \\[-6pt]
    Final witness limit (\finalWitnessLimit) [min]: & \legend{\ref{legend:wl:0}}\,0 & \legend{\ref{legend:wl:30}}\,30 & \legend{\ref{legend:wl:60}}\,60 & \legend{\ref{legend:wl:120}}\,120 & \legend{\ref{legend:wl:240}}\,240 & \legend{\ref{legend:wl:max}}\,\raisebox{0.25ex}{$\infty$} \\[1pt]
    \hline
\end{tabular*}
};

\arrayrulecolor{black}
\end{tikzpicture}%
    \caption{Impact of key choice and final witness limit~\finalWitnessLimit on the shortcut computation time and the number of shortcuts of event-to-event McULTRA, measured on the Switzerland network.
        For a label with arrival time~\arrivalTime and transfer time~\transferTime, a key ratio of $\arrivalWeight:\transferWeight$ indicates a key value of~$(\arrivalWeight\cdot\arrivalTime+\transferWeight\cdot\transferTime)/(\arrivalWeight+\transferWeight)$.
        A weight of~$0$ indicates that the criterion is only used as a tiebreaker.}
    \label{fig:keyWitnessLimit}
\end{figure*}
\begin{figure*}[h!]
    \newcommand{\plotHeight}{5.8cm}
\newcommand{\plotWidth}{0.49\textwidth}

\begin{tikzpicture}
\arrayrulecolor{legendColor}
\pgfplotsset{
   grid style={KITblack20,line width = 0.2pt,dash pattern = on 2pt off 1pt}
}

\colorlet{plotColor1}{KITred}
\colorlet{plotColor2}{KITgreen}
\colorlet{plotColor3}{KITseablue}
\colorlet{plotColor4}{KITorange}
\colorlet{plotColor5}{KITcyanblue}
\colorlet{plotColor6}{KITpalegreen}
\colorlet{plotColor7}{KITlilac}
\colorlet{plotColor8}{KITblue}

\begin{axis}[
   name=mainPlot,
   scaled y ticks = false,
   x tick label style={/pgf/number format/fixed},
   y tick label style={/pgf/number format/fixed},
   height=\plotHeight,
   width=\plotWidth,
   xmin=1,
   xmax=40,
   ymin=0,
   ymax=20,
   xlabel={Transfer speed [km/h]},
   xlabel style = {yshift=2pt},
   ylabel={Shortcut ratio},
   ylabel style = {yshift=-1pt},
   ytick={0, 5, 10, 15, 20},
   yticklabel style = {yshift=0.8pt},
   yticklabel=\pgfmathparse{\tick}${\pgfmathprintnumber{\pgfmathresult}}$\!,
   xtick pos=left,
   ytick pos=left,
   ytick align=outside,
   xtick align=outside,
   grid=both,
   axis line style={legendColor},
   at={(0,0)}
]

\addplot[color=KITblue,line width=1.5pt,mark=*,mark options={fill=white},mark repeat=2,mark phase=2] table[x=Speed,y=ShortcutsUnlimitedWithIsolated,col sep=tab]{fig/transferSpeedPreprocessingStop.dat};\label{legend:stop:nolimit}
\addplot[color=KITblue,line width=1.5pt,mark=*,mark repeat=2,mark phase=2] table[x=Speed,y=ShortcutsLimitedWithIsolated,col sep=tab]{fig/transferSpeedPreprocessingStop.dat};\label{legend:stop:limit}
\addplot[color=KITgreen,line width=1.5pt,mark=*,mark options={fill=white},mark repeat=2,mark phase=2] table[x=Speed,y=ShortcutsUnlimitedWithIsolated,col sep=tab]{fig/transferSpeedPreprocessingEvent.dat};\label{legend:event:nolimit}
\addplot[color=KITgreen,line width=1.5pt,mark=*,mark repeat=2,mark phase=2] table[x=Speed,y=ShortcutsLimitedWithIsolated,col sep=tab]{fig/transferSpeedPreprocessingEvent.dat};\label{legend:event:limit}

\end{axis}

\node[inner sep=0pt,outer sep=0pt] (legend) at (mainPlot.outer south west) {};

\node[inner sep=0pt,outer sep=0pt,anchor=north west] at (legend) {
\footnotesize
\begin{tabular*}{0.5\textwidth}{|@{~}l@{~}c@{\extracolsep{\fill}}c@{~}|}
    \hline
                    &                                    &                                     \\[-3pt]
                    & Stop-to-stop                       & Event-to-event                      \\[3pt]
    Speed limits    & \legend{\ref{legend:stop:limit}}   & \legend{\ref{legend:event:limit}}   \\[3pt]
    No speed limits & \legend{\ref{legend:stop:nolimit}} & \legend{\ref{legend:event:nolimit}} \\[4pt]
    \hline
\end{tabular*}
};

\begin{axis}[
   height=\plotHeight,
   width=\plotWidth,
   xmin=0,
   xmax=6,
   ymin=0,
   ymax=60,
   xtick={0, 1, 2, 3, 4, 5, 6},
   ytick style={draw=none},
   ytick={10, 20, 30, 40, 50, 60},
   xtick pos=left,
   xticklabels={,,},
   yticklabels={,,},
   xtick align=outside,
   grid=both,
   axis line style={legendColor},
   at={(0.51\linewidth,0)}
]

\addlegendimage{color=KITred50,fill=KITred50,area legend}\label{legend:mr:route}
\addlegendimage{color=KITred70,fill=KITred70,area legend}\label{legend:mr:transfer}
\addlegendimage{color=KITred,fill=KITred,area legend}\label{legend:mr:other}

\addlegendimage{color=KITseablue50,fill=KITseablue50,area legend}\label{legend:ur:route}
\addlegendimage{color=KITseablue70,fill=KITseablue70,area legend}\label{legend:ur:transfer}
\addlegendimage{color=KITseablue,fill=KITseablue,area legend}\label{legend:ur:other}

\addlegendimage{color=KITgreen,fill=KITgreen,area legend}\label{legend:utb}

\end{axis}

\begin{axis}[
   hide axis,
   ybar stacked,
   height=\plotHeight,
   width=\plotWidth,
   xmin=-0.5,
   xmax=5.5,
   ymin=0,
   ymax=3000,
   axis line style={legendColor},
   at={(0.51\linewidth,0)}
]

\addplot[draw=none,fill=KITred50,bar width=5pt] table[x expr=\coordindex-0.2,y=DijkstraRoutesTime,col sep=tab]{fig/transferSpeedQuery.dat};
\addplot[draw=none,fill=KITred70,bar width=5pt] table[x expr=\coordindex-0.2,y=DijkstraTransfersTime,col sep=tab]{fig/transferSpeedQuery.dat};
\addplot[draw=none,fill=KITred,bar width=5pt] table[x expr=\coordindex-0.2,y=DijkstraOtherTime,col sep=tab]{fig/transferSpeedQuery.dat};

\end{axis}

\begin{axis}[
   hide axis,
   ybar stacked,
   height=\plotHeight,
   width=\plotWidth,
   xmin=-0.5,
   xmax=5.5,
   ymin=0,
   ymax=3000,
   axis line style={legendColor},
   at={(0.51\linewidth,0)}
]

\addplot[draw=none,fill=KITseablue50,bar width=5pt] table[x expr=\coordindex,y=ULTRARoutesTime,col sep=tab]{fig/transferSpeedQuery.dat};
\addplot[draw=none,fill=KITseablue70,bar width=5pt] table[x expr=\coordindex,y=ULTRATransfersTime,col sep=tab]{fig/transferSpeedQuery.dat};
\addplot[draw=none,fill=KITseablue,bar width=5pt] table[x expr=\coordindex,y=ULTRAOtherTime,col sep=tab]{fig/transferSpeedQuery.dat};

\end{axis}

\begin{axis}[
	hide axis,
	ybar stacked,
	height=\plotHeight,
	width=\plotWidth,
	xmin=-0.5,
	xmax=5.5,
	ymin=0,
	ymax=3000,
	axis line style={legendColor},
	at={(0.51\linewidth,0)}
	]
	
	\addplot[draw=none,fill=KITgreen,bar width=5pt] table[x expr=\coordindex+0.2,y=TBTime,col sep=tab]{fig/transferSpeedQuery.dat};
	
\end{axis}

\begin{axis}[
   name=mainPlot,
   ybar stacked,
   height=\plotHeight,
   width=\plotWidth,
   xmin=-0.5,
   xmax=5.5,
   ymin=0,
   ymax=3000,
   xlabel={Transfer speed [km/h]},
   ylabel={Query time [s]},
   xlabel style = {yshift=2pt},
   ytick={0, 500, 1000, 1500, 2000, 2500, 3000},
   yticklabel=\pgfmathparse{\tick / 1000}${\pgfmathprintnumber{\pgfmathresult}}$\!,
   xtick={0, 1, 2, 3, 4, 5, 6, 7},
   xticklabels={5, 10, 15, 20, 30, 40},
   xtick pos=left,
   xtick style={draw=none},
   ytick pos=left,
   ytick align=outside,
   xtick align=outside,
   axis line style={legendColor},
   at={(0.51\linewidth,0)},
]
\end{axis}

\node[inner sep=0pt,outer sep=0pt,anchor=north east,xshift=\textwidth] at (legend) {
\footnotesize
\begin{tabular*}{0.49\textwidth}{|@{~}l@{~~~}c@{\extracolsep{\fill}}c@{\extracolsep{\fill}}c@{~}|}
    \hline
                  &                                &                                   &                                \\[-6pt]
                  &             Route              &             Transfer              &             Other              \\[1pt]
    MCR   & \legend{\ref{legend:mr:route}} & \legend{\ref{legend:mr:transfer}} & \legend{\ref{legend:mr:other}} \\[1pt]
    ULTRA-McRAPTOR & \legend{\ref{legend:ur:route}} & \legend{\ref{legend:ur:transfer}} & \legend{\ref{legend:ur:other}} \\[1pt]
    ULTRA-McTB & & & \legend{\ref{legend:utb}} \\[1pt]
    \hline
\end{tabular*}
};

\arrayrulecolor{black}
\end{tikzpicture}%
    \caption{Impact of transfer speed for the Switzerland network.
        \textit{Left:} Ratio of shortcuts compared to a transfer speed of 4.5\,km/h.
        Speed limits in the network were obeyed for the lines with filled circles and ignored for the lines with empty circles.
        \textit{Right:}
        Query performance of~MCR and ULTRA-based algorithms, averaged over 1\,000 random queries.
        Speed limits were obeyed.
        For the RAPTOR-based algorithms, query times are divided into route collecting/scanning, transfer relaxation, and remaining time.
    }
    \label{fig:transferSpeed}
\end{figure*}
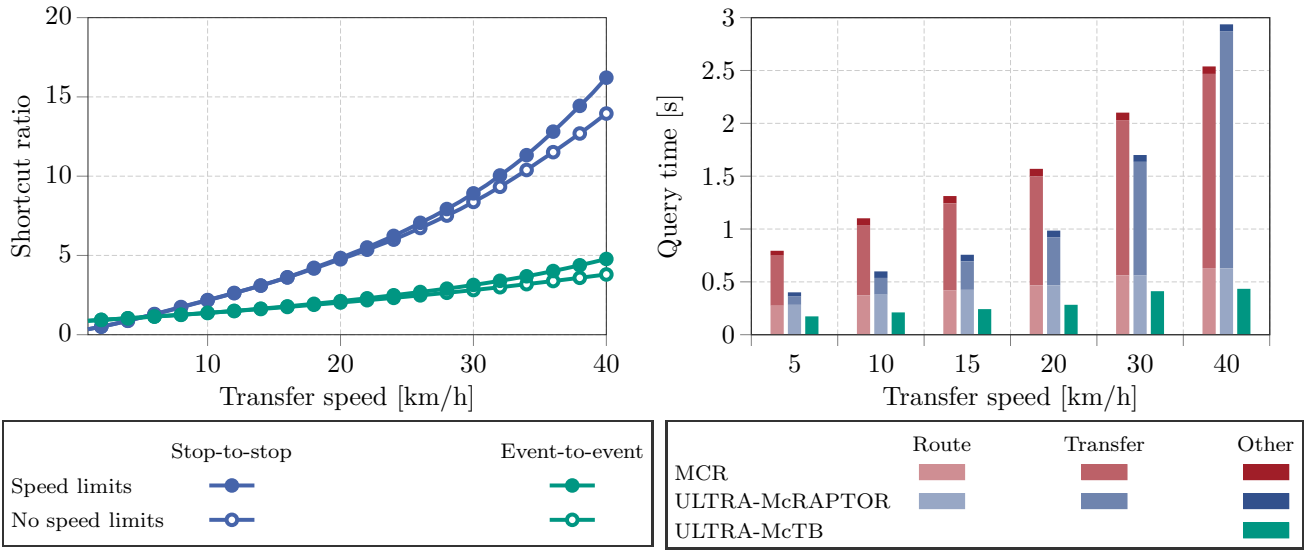
Figure~\ref{fig:keyWitnessLimit} shows the impact of the Dijkstra label key and the final witness limit~\finalWitnessLimit on the computation time and the number of shortcuts in the event-to-event variant.
Especially when the arrival time is weighted heavily, imposing a final witness limit saves considerable computation time, at the cost of producing noticeably more shortcuts.
The latter can be mitigated by weighting the transfer time more heavily, but this increases the computation time for strict final witness limits.
This is explained by the fact that many improper candidates are weakly dominated by witnesses found in previous McRAPTOR iterations, and therefore tend to have higher arrival times than proper candidates.
Increasing the arrival time weight moves proper candidates towards the front of the queue, allowing the stopping criterion to be applied earlier.
To strike a balance between computation time and number of shortcuts, we choose weights~$\arrivalWeight=1$ and $\transferWeight=0$ as well as a final witness limit~$\finalWitnessLimit=60$\,min for all subsequent experiments.

\begin{table*}[h!]
    \center
    \caption{Performance of bounded queries with~$\arrivalSlack=\tripSlack=1.25$, averaged over~10\,000 random queries.
        Query times are divided into forward and backward pruning searches and main search.
        Also reported are the number of rounds performed by the main search (Rnd.) and the number of found journeys (Jrn.).
        Performance of two-criteria algorithms is shown for comparison.
        (BM-)RAPTOR$^\ast$ only supports stop-to-stop queries with transitive transfers.
    }%
    \label{tbl:queries-bounded}%
    \begin{tabular*}{\textwidth}{@{\,}l@{\extracolsep{\fill}}l@{\hspace{-5pt}}c@{\extracolsep{\fill}}r@{\extracolsep{\fill}}r@{\extracolsep{\fill}}r@{\extracolsep{\fill}}r@{\extracolsep{\fill}}r@{\extracolsep{\fill}}r@{\,}}
        \toprule
        \multirow{2}{*}{Network} & \multirow{2}{*}{Algorithm} & \multirow{2}{*}{\shortstack[c]{\vspace{0.04cm}\\Full \vspace{0.15cm} \\ graph \vspace{-0.16cm}}} & \multirow{2}{*}{Rnd.} & \multirow{2}{*}{Jrn.} & \multicolumn{4}{c}{Time [ms]} \\
        \cmidrule(){6-9}
        &                &           & & & Forward  & Backward  & Main   & Total\\
        \midrule
        \multirow{6}{*}{London}
        & RAPTOR$^\ast$    & $\circ$    & 7.8 & 2.5 & -- & -- & -- & 5.2 \\
        & ULTRA-RAPTOR     & $\bullet$  & 8.2 & 4.3 & -- & -- & -- & 6.3 \\
        & ULTRA-TB         & $\bullet$  & 7.1 & 4.3 & -- & -- & -- & 4.3 \\[1pt]
        & BM-RAPTOR$^\ast$ & $\circ$    & 5.3 & 8.0 & 23.2 & 5.2 & 16.4 & 44.7 \\
        & ULTRA-BM-RAPTOR  & $\bullet$  & 4.6 & 12.5 & 10.7 & 3.0 & 7.7 & 21.3 \\
        & ULTRA-BM-TB      & $\bullet$  & 4.6 & 12.5 & 10.0 & 1.8 & 3.8 & 15.5\\[3pt]
        \multirow{6}{*}{Switzerland}
        & RAPTOR$^\ast$    & $\circ$    & 8.3 & 2.5 & -- & -- & --  & 12.3 \\
        & ULTRA-RAPTOR     & $\bullet$  & 8.3 & 4.6 & -- & -- & --  & 12.8 \\
        & ULTRA-TB         & $\bullet$  & 7.5 & 4.6 & -- & -- & --  & 5.0 \\[1pt]
        & BM-RAPTOR$^\ast$ & $\circ$    & 5.7 & 4.8 & 23.9 & 4.4 & 10.9 & 39.3 \\
        & ULTRA-BM-RAPTOR  & $\bullet$  & 4.8 & 9.2 & 21.4 & 3.4 & 8.9 & 33.7 \\
        & ULTRA-BM-TB      & $\bullet$  & 4.8 & 9.2 & 12.1 & 1.4 & 2.5 & 15.9\\[3pt]
        \multirow{6}{*}{Germany}
        & RAPTOR$^\ast$    & $\circ$    & 10.2 & 2.8 & -- & -- & --  & 344.5 \\
        & ULTRA-RAPTOR     & $\bullet$  & 10.4 & 5.3 & -- & -- & --  & 365.2 \\
        & ULTRA-TB         & $\bullet$  & 9.7 & 5.3 & -- & -- & --  & 89.1 \\[1pt]
        & BM-RAPTOR$^\ast$ & $\circ$    & 7.0 & 7.9 & 465.6 & 51.8 & 150.0 & 667.5 \\
        & ULTRA-BM-RAPTOR  & $\bullet$  & 5.9 & 13.8 & 550.4 & 40.8 & 120.8 & 711.9 \\
        & ULTRA-BM-TB      & $\bullet$  & 5.9 & 13.8 & 236.1 & 21.2 & 35.5 & 292.9 \\
        \bottomrule
    \end{tabular*}
\end{table*}
\begin{table*}[h!]
    \center
    \caption{Performance of bounded queries for different slack values on the Switzerland network, averaged over~10\,000 random queries.
        Query times are divided into forward and backward pruning searches and main search.
        Also reported are the number of rounds performed by the main search (Rnd.) and the number of found journeys (Jrn.).
        BM-RAPTOR$^\ast$ only supports stop-to-stop queries with transitive transfers.
    }%
    \label{tbl:queries-bounded-slacks}%
    \begin{tabular*}{\textwidth}{@{\,}l@{\extracolsep{\fill}}l@{\extracolsep{\fill}}l@{\hspace{-5pt}}c@{\extracolsep{\fill}}r@{\extracolsep{\fill}}r@{\extracolsep{\fill}}r@{\extracolsep{\fill}}r@{\extracolsep{\fill}}r@{\extracolsep{\fill}}r@{\,}}
        \toprule
        \multirow{2}{*}{Algorithm} & \multirow{2}{*}{$\tripSlack$} & \multirow{2}{*}{$\arrivalSlack$} & \multirow{2}{*}{\shortstack[c]{\vspace{0.04cm}\\Full \vspace{0.15cm} \\ graph \vspace{-0.16cm}}} & \multirow{2}{*}{Rnd.} & \multirow{2}{*}{Jrn.} & \multicolumn{4}{c}{Time [ms]} \\
        \cmidrule(){7-10}
        &                & &           & & & Forward  & Backward  & Main   & Total\\
        \midrule
        \multirow{8}{*}{BM-RAPTOR$^\ast$} & 1.25 & 1.2  & $\circ$ & 5.7 & 4.5 & 22.9 & 3.4 & 8.3 & 34.5 \\
        & 1.25 & 1.25 & $\circ$ & 5.7 & 4.8 & 23.9 & 4.4 & 10.9 & 39.3 \\
        & 1.25 & 1.3  & $\circ$ & 5.7 & 5.1 & 24.7 & 5.3 & 13.2 & 43.2 \\
        & 1.25 & 1.5  & $\circ$ & 5.7 & 5.8 & 27.8 & 8.5 & 22.4 & 58.7 \\[3pt]
        & 1.5  & 1.2  & $\circ$ & 6.5 & 5.1 & 23.3 & 5.4 & 14.6 & 43.2 \\
        & 1.5  & 1.25 & $\circ$ & 6.5 & 5.4 & 24.1 & 6.9 & 19.2 & 50.2 \\
        & 1.5  & 1.3  & $\circ$ & 6.5 & 5.9  & 24.5 & 8.1 & 23.1 & 55.7 \\
        & 1.5  & 1.5  & $\circ$ & 6.5 & 6.9  & 26.3 & 12.9 & 39.0 & 78.2 \\[8pt]
        \multirow{8}{*}{ULTRA-BM-RAPTOR}   & 1.25 & 1.2  & $\bullet$ & 4.8 & 8.7 & 21.0 & 2.9 & 7.5 & 31.4 \\
        & 1.25 & 1.25 & $\bullet$ & 4.8 & 9.2 & 21.4 & 3.4 & 8.9 & 33.7 \\
        & 1.25 & 1.3  & $\bullet$ & 4.8 & 9.6 & 22.2 & 4.0 & 10.2 & 36.3 \\
        & 1.25 & 1.5  & $\bullet$ & 4.8 & 10.6 & 24.5 & 6.5 & 15.9 & 46.8 \\[3pt]
        & 1.5  & 1.2  & $\bullet$ & 5.6 & 9.3 & 20.7 & 3.4 & 8.9 & 33.0 \\
        & 1.5  & 1.25 & $\bullet$ & 5.6 & 10.0 & 21.4 & 4.2 & 10.7 & 36.2 \\
        & 1.5  & 1.3  & $\bullet$ & 5.6 & 10.5 & 22.3 & 4.8 & 12.4 & 39.5 \\
        & 1.5  & 1.5  & $\bullet$ & 5.6 & 11.9 & 24.3 & 7.7 & 20.2 & 52.3 \\[8pt]
        \multirow{8}{*}{ULTRA-BM-TB}       & 1.25 & 1.2  & $\bullet$ & 4.8 & 8.7 & 11.3 & 1.1 & 1.7 & 14.1 \\
        & 1.25 & 1.25 & $\bullet$ & 4.8 & 9.2 & 12.1 & 1.4 & 2.5 & 15.9 \\
        & 1.25 & 1.3  & $\bullet$ & 4.8 & 9.6 & 12.9 & 1.7 & 3.3 & 17.9 \\
        & 1.25 & 1.5  & $\bullet$ & 4.8 & 10.6 & 16.5 & 3.3 & 7.4 & 26.7 \\[3pt]
        & 1.5  & 1.2  & $\bullet$ & 5.7 & 9.3 & 11.7 & 1.3 & 2.4 & 15.5 \\
        & 1.5  & 1.25 & $\bullet$ & 5.7 & 10.0 & 12.5 & 1.7 & 3.5 & 17.8 \\
        & 1.5  & 1.3  & $\bullet$ & 5.7 & 10.5 & 13.3 & 2.1 & 4.8 & 20.3 \\
        & 1.5  & 1.5  & $\bullet$ & 5.7 & 11.9 & 15.7 & 4.0 & 10.6 & 30.4 \\
        \bottomrule
    \end{tabular*}
\end{table*}

\newpage
Table~\ref{tbl:preprocessing} shows overall results for the shortcut computation on all networks.
Compared to two-criteria \mbox{ULTRA}, the number of shortcuts increases by less than a factor of~2.
The shortcut computation takes 2--3 times longer in the stop-to-stop variant and 4--5 times longer in the event-to-event variant, with slightly higher values for the London network.
By contrast, three-criteria MCR has a slowdown of about~20 compared to its two-criteria variant, MR-$\infty$, indicating that the shortcut computation scales much better for the additional criterion than MCR.

\subparagraph*{Full Pareto Queries.}
We evaluated two query algorithms for computing full Pareto sets: ULTRA-McRAPTOR (combining McRAPTOR with stop-to-stop McULTRA shortcuts) and ULTRA-McTB (combining McTB with event-to-event McULTRA shortcuts).
Query times are reported in Table~\ref{tbl:queries-full}.
Compared to MCR, the fastest previously known algorithm, ULTRA-McRAPTOR is about twice as fast and ULTRA-McTB achieves a speedup of 6--8.
As in the two-criteria scenario, ULTRA-McRAPTOR achieves its speedup mostly by significantly reducing the transfer relaxation time, yielding similar query times to \mbox{McRAPTOR} on a transitively closed transfer graph.
The slight slowdown compared to transitive \mbox{McRAPTOR} is explained by the significantly higher number of found journeys, indicating that unlimited transfers are especially crucial in a multicriteria setting.

\subparagraph*{Impact of Transfer Speed.}
Figure~\ref{fig:transferSpeed} shows how McULTRA is affected by the speed of the transfer mode.
For two-criteria ULTRA, Baum et al.~\cite{Bau19} observed that the number of shortcuts declines once the transfer speed becomes competitive with public transit.
With arrival time and number of trips as criteria, there is no reason to use public transit unless it saves travel time, so fewer shortcuts are produced for higher transfer speeds.
This is no longer the case when adding transfer time as a criterion, since making a public transit detour can save transfer time.
Consequently, the number of shortcuts increases for high transfer speeds.
This effect is much stronger for the stop-to-stop variant than for the event-to-event variant, indicating that most of the additional shortcuts are only required at a few specific times during the day.
All query algorithms become slower for higher transfer speeds as the search space increases.
ULTRA-McRAPTOR is practical for transfer speeds up to 20\,km/h (which is faster than the average travel speed via bicycle or e-scooter), but loses its advantage over MCR for transfer speeds above~30\,km/h.
By contrast, ULTRA-McTB maintains a speedup of 5--6 even for transfer speeds up to 40\,km/h, due to the slower increase in the number of shortcuts.

\subparagraph*{Bounded Queries.}
We conclude by evaluating our bounded query algorithms.
As shown in Table~\ref{tbl:preprocessing}, shortcut augmentation increases the number of shortcuts by a factor of~3.
The augmentation takes less than a minute even for Germany and is therefore negligible compared to the shortcut computation time.
Query times for the bounded algorithms are shown in Table~\ref{tbl:queries-bounded}.
Based on the results from Figure~\ref{fig:transferTimeSavings}, we chose slack values~$\arrivalSlack=\tripSlack=1.25$ as a good tradeoff between solution quality and query speed.
The bounded algorithms are a factor of 2--4 slower than their two-criteria counterparts.
This is consistent with the slowdown observed for \mbox{BM-RAPTOR} compared to RAPTOR on limited transfer graphs.
Most of the running time is taken up by the forward pruning search, indicating that the pruning scheme is highly effective.
Due to weaker target pruning and the algorithmic changes discussed in Section~\ref{sec:restricted-query-algorithms}, the forward pruning search is slower than a two-criteria query, particularly for ULTRA-BM-TB.
Nevertheless, ULTRA-BM-TB is up to 2.5 times faster than ULTRA-BM-RAPTOR.
Compared to the ULTRA-based algorithms for full Pareto sets, the bounded variants achieve a speedup of around an order of magnitude.
Compared to MCR, which was previously the fastest algorithm for this problem setting, ULTRA-BM-TB achieves a speedup of 30--80.

Table~\ref{tbl:queries-bounded-slacks} shows the performance of the bounded query algorithms for different slack values on the Switzerland network.
The forward pruning search is not significantly impacted by the trip slack, and only moderately by the arrival slack, due to the weaker target pruning.
The backward pruning searches and main search take significantly longer for higher slack values, and start to dominate the overall running time for~$\tripSlack=1.5$ and~$\arrivalSlack=1.5$.
However, even for high slack values, the bounded algorithms remain much faster than their unbounded counterparts.

\section{Conclusion}
\label{sec:conclusion}
We showed that in order to improve the solution quality in a multimodal network with unlimited transfers, it is necessary to optimize transfer time as a third criterion.
To achieve this, we developed McTB, a fast three-criteria algorithm that avoids costly dynamic data structures.
To enable unlimited transfers, we proposed McULTRA, a three-criteria extension of ULTRA.
Our shortcut computation algorithm runs in reasonable time and produces less than twice as many shortcuts as two-criteria ULTRA.
The combination of McULTRA and McTB achieves a speedup of 6--8 over the state of the art and remains practical even for fast transfer modes.
Finally, we developed RAPTOR- and TB-based algorithms for computing restricted Pareto sets in a multimodal network by using ULTRA shortcuts.
The latter is up to 80 times faster than MCR, the fastest previously known algorithm for three-criteria multimodal queries.

Future work could involve adapting ULTRA for additional criteria, such as fare or occupancy rate.
It is unclear whether an efficient TB-based algorithm can be designed for more than three criteria.
However, the results reported in~\cite{Del19} suggest that BM-RAPTOR scales well for additional criteria since the pruning searches take up the majority of the running time.
Therefore, combining TB-based pruning searches with a McRAPTOR-based main search may be a promising approach for designing an efficient bounded query algorithm.

\bibliographystyle{plainurl}
\bibliography{arxiv}
\end{document}